\newcommand{\funcdef}[3]{#1:#2\to#3}
\def\wsf{v}
\def\Crho{\operatorname{C}_\rho}
\def\prop{\theta}
\newcommand{\set}[1]{\left\{#1\right\}}
\newcommand{\seq}[2]{#1_{1},#1_{2},\ldots,#1_{#2}}
\newcommand{\pvec}{\mathbf{p}}
\newcommand{\qvec}{\mathbf{q}}
\def\Obj{\mathcal{O}}
\def\nllv{\mathbf{w}}
\newcommand{\logit}{\operatorname{logit}}
\newcommand{\PAV}{\operatorname{PAV}}
\def\R{\mathbb{R}}
\def\Exs#1#2#3%
\title{The PAV Algorithm optimizes Binary Proper Scoring Rules}
\author{Niko Br\"ummer\footnotemark[2]\ \footnotemark[3] \and Johan du Preez\footnotemark[3]}
\begin{document}
\maketitle
\renewcommand{\thefootnote}{\fnsymbol{footnote}}
\footnotetext[2]{Spescom DataVoice, Box 582, Stellenbosch 7599, South Africa. Email: niko.brummer@gmail.com}
\footnotetext[3]{Digital Signal Processing Group, Department of Electrical and Electronic Engineering, University of Stellenbosch. Private Bag X1, Matieland, 7602, Stellenbosch, South Africa. Email: dupreez@dsp.sun.ac.za}
\renewcommand{\thefootnote}{\arabic{footnote}}

\begin{abstract}
There has been much recent interest in application of the \emph{pool-adjacent-violators (PAV) algorithm} for the purpose of \emph{calibrating} the probabilistic outputs of automatic pattern recognition and machine learning algorithms. Special cost functions, known as \emph{proper scoring rules} form natural objective functions to judge the goodness of such calibration. We show that for binary pattern classifiers, the non-parametric optimization of calibration, subject to a monotonicity constraint, can solved by PAV and that this solution is optimal for all regular binary proper scoring rules. This extends previous results which were limited to \emph{convex} binary proper scoring rules. We further show that this result holds not only for calibration of probabilities, but also for calibration of \emph{log-likelihood-ratios}, in which case optimality holds independently of the prior probabilities of the pattern classes.     
\end{abstract}

\begin{keywords} 
pool-adjacent-violators algorithm, proper scoring rule, calibration, isotonic regression 
\end{keywords}


\pagestyle{myheadings}
\thispagestyle{plain}
\markboth{N.Br\"ummer and J.du Preez}{PAV and Proper Scoring Rules}

\section{Introduction}
There has been much recent interest in using the \emph{pool-adjacent-violators}\footnote{a.k.a \emph{pair}-adjacent-violators} (PAV) algorithm for the purpose of \emph{calibration} of the outputs of machine learning or pattern recognition systems~\cite{zadrozny_elkan,csl,mizil_05,synergy,daniel,rocch}. Our contribution is to point out and prove some previously unpublished results concerning the optimality of using the PAV algorithm for such calibration. 

In the rest of the introduction, \S\ref{sec:calibration} defines calibration; \S\ref{sec:rbpsr} introduces \emph{regular binary proper scoring rules}, the class of objective functions which we use to judge the goodness of calibration; and~\S\ref{sec:smnpc} gives more specific details of how this calibration problem forms the non-parametric, monotonic optimization problem which is the subject of this paper.

The rest of the paper is organized as follows: In~\S\ref{sec:main_problem} we state the main optimization problem under discussion; \S\ref{sec:prev_work} summarizes previous work related to this problem; \S\ref{sec:proof}, the bulk of this paper, presents our proof that PAV solves this problem; and finally~\S\ref{sec:PAV-LLR} shows that the PAV can be adapted to a closely related calibration problem, which has the goal of assigning calibrated \emph{log-likelihood-ratios}, rather than probabilities. We conclude in~\S\ref{discussion} with a short discussion about applying PAV calibration in pattern recognition.

The results of this paper can be summarized as follows: The PAV algorithm, when used for supervised, monotonic, non-parametric calibration is (i) optimal for all regular binary proper scoring rules and is moreover (ii) optimal at any prior when calibrating log-likelihood-ratios. 

\subsection{Calibration}
\label{sec:calibration}
In this paper, we are interested in the calibration of \emph{binary} pattern classification systems which are designed to discriminate between two classes, by outputting a scalar \emph{confidence score}\footnote{The reader is cautioned not to confuse \emph{score} as defined here, with \emph{proper scoring rule} as defined in the next subsection.}. Let $x$ denote a to-be-classified input pattern\footnote{The nature of $x$ is unimportant here, it can be an image, a sound recording, a text document etc.}, which is known to belong to one of two classes: the \emph{target} class $\prop_1$, or the \emph{non-target} class $\prop_2$. The pattern classifier under consideration performs a mapping $x\mapsto s$, where $s$ is a real number, which we call the \emph{uncalibrated confidence score}. The only assumption that we make about $s$ is that it has the following \emph{sense}: \emph{The greater the score, the more it favours the target class---and the smaller, the more it favours the non-target class.}  

In order for the pattern classifier output to be more generally useful, it can be processed through a calibration transformation. We assume here that the calibrated output will be used to make a minimum-expected-cost Bayes decision~\cite{DeGroot,Wald}. This requires that the score be transformed to act as \emph{posterior probability} for the target class, given the score. We denote the transform of the uncalibrated score $s$ to calibrated target posterior thus: $s\mapsto P(\prop_1|s)$. In the first (and largest) part of this paper, we consider this calibration transformation as an atomic step and show in what sense the PAV algorithm is optimal for this transformation.

In most machine-learning contexts, it is assumed that the object of calibration is (as discussed above) to assign \emph{posterior probabilities}~\cite{platt,zadrozny_elkan,mizil_05}. However, the calibration of \emph{log-likelihood-ratios} may be more appropriate in some pattern recognition fields such as automatic speaker recognition~\cite{doddington,csl}. This is important in particular for \emph{forensic} speaker recognition, in cases where a Bayesian framework is used to represent the weight of the speech evidence in likelihood-ratio form~\cite{daniel}. With this purpose in mind, in~\S\ref{sec:PAV-LLR}, we decompose the transformation $s\mapsto P(\prop_1|s)$ into two consecutive steps, thus: $s\mapsto \log\frac{P(s|\prop_1)}{P(s|\prop_2)} \mapsto P(\prop_1|s)$, where the intermediate quantity is known as the \emph{log-likelihood-ratio} for the target, relative to the non-target. The first stage, $s\mapsto \log\frac{P(s|\prop_1)}{P(s|\prop_2)}$, is now the calibration transform and it is performed by an adapted PAV algorithm (denoted PAV-LLR), while the second stage, $\log\frac{P(s|\prop_1)}{P(s|\prop_2)} \mapsto P(\prop_1|s)$, is just standard application of Bayes' rule. One of the advantages of this decomposition is that the log-likelihood-ratio is independent of $P(\prop_1)$, the prior probability for the target class---and that therefore the pattern classifier (which does $x\mapsto s$) and the calibrator (which does $s\mapsto\log\frac{P(s|\prop_1)}{P(s|\prop_2)}$) can both be independent of the prior. The target prior need only be available for the final step of applying Bayes' rule. Our important contribution here is to show that the PAV-LLR calibration is optimal \emph{independently} of the prior $P(\prop_1)$.

\subsection{Regular Binary Proper Scoring Rules}
\label{sec:rbpsr}
We have introduced calibration as a tool to map uncalibrated scores to posterior probabilities, which may then be used to make minimum-expected-cost Bayes decisions. We next ask how the quality of a given calibrator may be judged. Since the stated purpose of  calibration is to make cost-effective decisions, the goodness of calibration may indeed be judged by decision cost. For this purpose, we consider a class of special cost functions known as \emph{proper scoring rules} to quantify the cost-effective decision-making ability of posterior probabilities, see e.g.~\cite{Good,DeGroot,DeGFien,Dawid,Buja_degrees,Gneiting}, or our previous work~\cite{csl}. Since this paper is focused on the PAV algorithm, a detailed introduction to proper scoring rules is out of scope. Here we just need to define the class of regular binary proper scoring rules in a way that is convenient to our purposes. (Appendix~\ref{appendix} gives some notes to link this definition to previous work.) 

We define a \emph{regular binary proper scoring rule} (RBPSR) to be a function,  $\funcdef{\Crho}{\set{\prop_1,\prop_2}\times[0,1]}{[0,\infty]}$, such that
\begin{align}
\label{eq:c_rbpsr}
\Crho(\prop_1,q) & = \int_{q}^{1}\frac{1}{\eta}\rho(\eta) \,d\eta, & 
\Crho(\prop_2,q) & = \int_{0}^{q}\frac{1}{1-\eta}\rho(\eta) \,d\eta 
\end{align} 
for which the following conditions must hold:
\begin{romannum}
\item These integrals exist and are \emph{finite}, except\footnote{This exception accommodates cases like the logarithmic scoring rule, which is obtained at $\rho(\eta)=1$, see~\cite{Dawid,Gneiting}.} possibly for $\Crho(\prop_1,0)$ and $\Crho(\prop_2,1)$, which may assume the value $\infty$.
\item $\rho(\eta)$ is a probability distribution\footnote{It is easily shown that if $\rho(\eta)$ cannot be normalized (i.e. $\int_0^1 \rho(\eta)\,d\eta\to\infty$), then one or both of $\Crho(\prop_1,q)$ or $\Crho(\prop_2,q)$ must also be infinite for every value of $q$, so that a useful proper scoring rule is not obtained.} over $[0,1]$, i.e. $\rho(\eta)\ge0$ for $0\le\eta\le 1$, and $\int_0^1 \rho(\eta)\,d\eta = 1$. 
\end{romannum}
In other words the RBPSR's are a family of functions parametrized by $\rho$. If $\rho(\eta)>0$ almost everywhere, then the RBPSR is denoted \emph{strict}, otherwise it is \emph{non-strict}. We list some examples, which will be relevant later:
\begin{remunerate}
\item If $\rho(\eta)=\delta(\eta-\eta')$, where $\delta$ denotes Dirac-delta, then $\Crho(\cdot,q)$ represents the misclassification cost of making binary decisions by comparing probability $q$ to a threshold of $\eta'$. Note that this proper scoring rule is \emph{non-strict}. Moreover it is discontinuous and therefore \emph{not convex} as a function of $q$. This is but one example of many non-convex proper scoring rules. A more general example is obtained by convex combination\footnote{The $\alpha_i>0$ and sum to $1$.} of multiple Dirac-deltas: $\rho(\eta)=\sum_i \alpha_i\delta(\eta-\eta'_i)$.
\item If $\rho(\eta)=6\eta(1-\eta)$, then $\Crho$ is the  (\emph{strict}) quadratic\footnote{In this context the average of the Brier proper scoring is just a mean-squared-error.} proper scoring rule, also known as the Brier scoring rule~\cite{Brier}.
\item If $\rho(\eta)=1$, then $\Crho$ is the (\emph{strict}) logarithmic scoring rule, originally proposed by~\cite{Good}.  
\end{remunerate}
The salient property of a binary proper scoring rule is that for any $0\le p,q\le 1$, its expectations w.r.t $q$ are minimized at $q$, so that:  $q\Crho(\prop_1,q)+(1-q)\Crho(\prop_2,q) \le q\Crho(\prop_1,p)+(1-q)\Crho(\prop_2,p)$. For a strict RBPSR, this minimum is unique. We show below in lemma~\ref{lemma3} how this property derives from~\eqref{eq:c_rbpsr}.  
 
\subsection{Supervised, monotonic, non-parametric calibration}
\label{sec:smnpc}
We have thus far established that we want to find a calibration method to map scores to probabilities and that we then want to judge the goodness of these probabilities via RBPSR. We can now be more specific about the calibration problem that is optimally solvable by PAV:
\begin{remunerate}
\item Firstly, we constrain the calibration transformation $s\mapsto P(\prop_1|s)$ to be a \emph{monotonic non-decreasing} function: $\R\to[0,1]$. This is to preserve the above-defined \emph{sense} of the score $s$. This monotonicity constraint is discussed further in~\S\ref{discussion}. See also~\cite{csl,zadrozny_elkan,mizil_05,daniel}.  
\item Secondly, we assume that we are given a finite number, $T$, of \emph{trials}, for each of which the to-be-calibrated pattern classifier has produced a score. We denote these scores $s_1,s_2,\ldots s_T$. We need only to map each of these scores to a probability. In other words, we do not have to find the calibration function itself, we only have to \emph{non-parametrically} assign the $T$ function output values $p_1,p_2,\ldots,p_T$, while respecting the above monotonicity constraint. To simplify notation, we assume without loss of generality, that $s_1 \le s_2 \le \cdots \le s_T$. (In practice one has to sort the scores to make it so.) This now means that monotonicity is satisfied if $0\le p_1 \le p_2 \le \cdots \le p_T\le 1$. Notice that the input scores now only serve to define the order. Once this order is fixed, one does not need to refer back to the scores. The output probabilities can now be independently assigned, as long as they respect the above chain of inequalities.     
\item Finally, we assume that the problem is \emph{supervised}: For every one of the $T$ trials the true class is known and is denoted: $\ell_1,\ell_2,\ldots,\ell_T\in\set{\prop_1,\prop_2}$. This allows evaluation of the RBPSR for every trial $t$ as $\Crho(\ell_t,p_t)$. A weighted combination of the RBPSR costs for every trial can now be used as the objective function which needs to be be minimized.
\end{remunerate}
In summary the problem which is solved by PAV is that of finding $p_1,p_2,\ldots,p_T$, subject to the monotonicity constraints, so that the RBPSR objective is minimized. This problem is succinctly restated in the following section:

\section{Main optimization problem statement}
\label{sec:main_problem}
The problem of interest may be stated as follows:
\begin{remunerate}
\item We are given as input:
\begin{romannum}
	\item A sequence of $T$ indices, denoted $(1,T)=1,2,\ldots,T$ with a corresponding sequence of labels $\seq{\ell}{T}\in\set{\prop_1,\prop_2}$.
  \item A pair of positive weights, $v_1,v_2>0$. 
\end{romannum}
\item We use the notation $\wsf(\ell_t)$ to assign a weight to every index, by letting $\wsf(\prop_1)=v_1$ and $\wsf(\prop_2)=v_2$.
\item The problem is now to find the sequence of $T$ probabilities, denoted $\pvec_{1,T}=\seq{p}{T}$, which minimizes the following \emph{objective}:
\begin{align}
\label{eq:pav_obj}
\Obj_{1,T}(\pvec_{1,T}) &= \sum_{t=1}^T \wsf(\ell_t)\Crho(\ell_t,p_t),
\end{align} 
subject to the \emph{monotonicity constraint}:
\begin{equation}
\label{eq:mon_constr}
0 \le p_1 \le p_2 \le \cdots \le p_T\ \le 1
\end{equation}
We require the solution to hold (be a feasible minimum) \emph{simultaneously} for every RBPSR $\Crho$. We already know that if such a solution exists, it must be unique, because the original PAV algorithm as published in~\cite{ayer1955} in 1955, was shown to give a unique optimal solution for the special case of $\rho(\eta)=1$, for which $\bigl(\Crho(\prop_1,p),\Crho(\prop_2,p)\bigr)=\bigl(-\log(p),-\log(1-p)\bigr)$. See theorem~\ref{th:ayer} and corollary~\ref{corollary2} below for details. 
\end{remunerate}

\section{Relationship of our proof to previous work}
\label{sec:prev_work}
Although not stated explicitly in terms of a proper scoring rule, the first publication of the PAV algorithm~\cite{ayer1955}, was already proof that it optimized the logarithmic proper scoring rule. It is also known that PAV optimizes the quadratic (Brier) scoring rule~\cite{zadrozny_elkan}, and indeed that it optimizes combinations of more general convex functions~\cite{best,ahuja_orlin_01}. However as pointed out above, there are proper scoring rules that are not convex.

In our previous work~\cite{csl}, where we made use of calibration with the PAV algorithm, we did mention the same results presented here, but without proof. This paper therefore complements that work, by providing proofs.

We also note that independently, in~\cite{rocch}, it was stated ``it can be
proved that the same [PAV algorithm] is obtained when using
any proper scoring function'', but this was also without proof or further references\footnote{Notes to reviewers: Note 1: We contacted Fawcet and Niculescu-Mizil to ask if they had a proof. They replied that their statement was based on the assumption that proper scoring rules are convex, which by~\cite{best} is then optimized by PAV. Since we include here also non-convex proper scoring rules, our results are more general. Note 2: The paper~\cite{ubhaya} has the word `quasi-convex' in the title and employs the PAV algorithm for a solution. This could suggest that our problem was solved in that paper, but a different problem was solved there, namely: ``the approximation problem of fitting n data points by a quasi-convex function using the least squares distance function.''}.      

We construct a proof that the PAV algorithm solves the problem as stated in~\S\ref{sec:main_problem}, by roughly following the pattern of the unpublished document~\cite{ahuja_orlin_98}, where the optimality of PAV was proved for the case of \emph{strictly convex} cost functions. That proof is not applicable as is for our purposes, because as pointed out above, some RBPSR's are not convex. We will show however in lemma~\ref{lemma3} below, that all RBPSR's and their expectations are \emph{quasiconvex} and that the proof can be based on this quasiconvexity, rather than on convexity. Note that when working with convex cost functions, one can use the fact that positively weighted combinations of convex functions are also convex, but this is not true in general for quasiconvex functions. For our case it was therefore necessary to prove explicitly that expectations of RBPSR's are also quasiconvex. A further complication that we needed to address was that non-strict RBPSR's lead to unidirectional implications, in places where the strictly convex cost functions of the proof in~\cite{ahuja_orlin_98} gave \emph{if and only if} relationships. 

Finally, we note that although the more general case of PAV for non-strict convex cost functions was treated in~\cite{best}, we could not base our proof on theirs, because they used properties of convex functions, such as subgradients, which are not applicable to our quasiconvex RBPSR's.

\section{Proof of optimality of PAV} 
\label{sec:proof}
This section forms the bulk of this paper and is dedicated to prove that a version of the PAV algorithm solves the optimization problem stated in~\S\ref{sec:main_problem}.
 
\begin{figure}[!ht] 
\centerline{
  \includegraphics[
                   width=0.8\textwidth]
                   {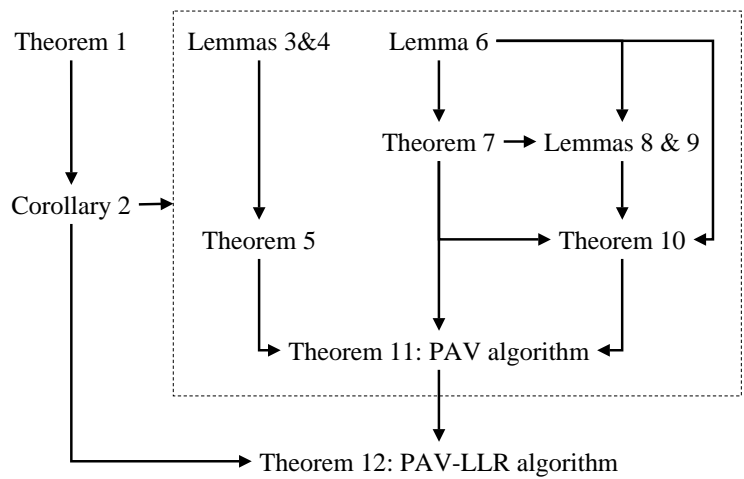}
}
\caption[PAV Proof]{Proof structure: PAV is optimal for all RBPSR's and PAV-LLR is optimal for all RBPSR's and priors.}
\label{fig:pav_proof}
\end{figure}

See figure~\ref{fig:pav_proof} for a roadmap of the proof: Theorem~\ref{th:ayer} and corollary~\ref{corollary2} give the closed-form solution for the logarithmic RBPSR. For the PAV algorithm, we use corollary~\ref{corollary2} just to show that there is a unique solution, but we re-use it later to prove the prior-independence of the PAV-LLR algorithm. Inside the dashed box, theorem~\ref{theorem1} shows how multiple optimal \emph{subproblem} solutions can constitute the optimal solution to the whole problem. Theorems~\ref{theorem2} and~\ref{theorem3} respectively show how to find and combine optimal subproblem solutions, so that the PAV algorithm can use them to meet the requirements of theorem~\ref{theorem1}. 
   
\subsection{Unique solution}
In this section, we use the work of Ayer et al, reproduced here as theorem~\ref{th:ayer}, to show via corollary~\ref{corollary2} that, if our problem does have a solution for every RBPSR, then it must be unique, because the special case of the logarithmic scoring rule (when $\rho(\eta)=1$) does have a unique solution.
\begin{theorem}[Ayer et al., 1955]
\label{th:ayer}
Given non-negative real numbers $a_t,b_t$, such that $a_t+b_t>0$ for every $t=1,2,\ldots,T$, the maximization of the objective $\Obj'_{1,T}(\pvec_{1,T})=\prod_{t=1}^T (p_t)^{a_t}(1-p_t)^{b_t}$, subject to the monotonicity constraint~\eqref{eq:mon_constr}, has the unique solution, $\pvec_{1,T}=\seq{p}{T}$, where:
\begin{equation}
\begin{split}
p_t &= \max_{1\le i\le t} \; \min_{t\le j\le T} r'_{i,j} \\
& = \min_{t\le j\le T} \max_{1\le i\le t} r'_{i,j},\\
\end{split}
\end{equation}
where
\begin{equation}
r'_{i,j} =\frac{\sum_{k=i}^j a_k}{\sum_{k=i}^j a_k+b_k}
\end{equation}
\end{theorem}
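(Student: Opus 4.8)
The plan is to take logarithms and recognize the problem as strictly convex isotonic regression. Since $\log$ is increasing, maximizing $\Obj'_{1,T}$ over the monotone chain is equivalent to minimizing
\[
F(\pvec_{1,T})=\sum_{t=1}^{T}g_t(p_t),\qquad g_t(p)=-a_t\log p-b_t\log(1-p),
\]
over the same feasible set. Because $a_t+b_t>0$ we have $g_t''(p)=a_t/p^2+b_t/(1-p)^2>0$ on $(0,1)$, so each $g_t$ is strictly convex (and lower semicontinuous on $[0,1]$, possibly with the value $+\infty$ at an endpoint). Hence $F$ is strictly convex, while the feasible set $\set{\pvec_{1,T}:0\le p_1\le\cdots\le p_T\le1}$ is compact and convex. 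This at once gives existence and \emph{uniqueness} of the minimizer, which is the uniqueness claimed in the theorem; it remains to identify this minimizer with the stated max--min expression.

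Two facts about $r'_{i,j}$ do the real work. First, $r'_{i,j}$ is exactly the pooled minimizer of the block $i,\ldots,j$: minimizing $\sum_{k=i}^{j}g_k(p)$ over a single common value $p$ gives the stationarity equation $-\bigl(\sum_k a_k\bigr)/p+\bigl(\sum_k b_k\bigr)/(1-p)=0$, whose solution is $r'_{i,j}$. Second, $r'_{i,j}$ obeys a mediant (averaging) inequality: for any split $i\le m<j$, the value $r'_{i,j}$ is a weighted average of $r'_{i,m}$ and $r'_{m+1,j}$ with positive weights $\sum_{k=i}^{m}(a_k+b_k)$ and $\sum_{k=m+1}^{j}(a_k+b_k)$, so that $\min\set{r'_{i,m},r'_{m+1,j}}\le r'_{i,j}\le\max\set{r'_{i,m},r'_{m+1,j}}$. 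The assumption $a_t+b_t>0$ is precisely what keeps these weights positive and $r'$ well defined.

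With these in hand I would set $p^*_t=\max_{1\le i\le t}\min_{t\le j\le T}r'_{i,j}$ and verify, using only the mediant inequality, the following chain: (i) $p^*_t\le p^*_{t+1}$, so $p^*$ is feasible; (ii) $p^*$ is piecewise constant, and on each maximal constant run $i_0,\ldots,j_0$ its common value equals the pooled value $r'_{i_0,j_0}$; and (iii) the min--max expression $\min_{t\le j\le T}\max_{1\le i\le t}r'_{i,j}$ yields the same number, establishing the equality of the two forms. Finally, since on each block $p^*$ equals the pooled minimizer, the summed gradient of $F$ vanishes there, while the mediant inequality forces the block values to increase strictly across boundaries, so that each binding constraint $p_t=p_{t+1}$ is consistent with convex optimality; strict convexity of $F$ then upgrades these first-order conditions to global optimality, giving $p^*=\hat{\pvec}_{1,T}$.

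The routine part is the convexity/uniqueness argument of the first paragraph. The main obstacle is steps (ii) and (iii): showing that the purely combinatorial max--min (and min--max) expression actually reproduces the block-pooled minimizer. Everything there reduces to the mediant inequality, but converting it into the exact block decomposition requires careful tracking of which outer index $i$ attains the max and which inner index $j$ attains the min at each $t$, and of the consistency of these choices as $t$ advances. I expect this index bookkeeping, rather than any analytic subtlety, to be the delicate step.
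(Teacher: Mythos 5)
The paper does not actually prove this theorem: it is imported verbatim from Ayer et al.\ (1955), and the paper's ``proof'' consists of a citation to Theorem 2.2 and Corollary 2.1 of that reference, plus a remark that the direction of the monotonicity constraint must be reversed (which swaps the roles of $i$ and $j$ and hence of $\max$ and $\min$). So any self-contained argument you give is by construction a different route. Your route --- pass to $F=\sum_t g_t(p_t)$ with $g_t(p)=-a_t\log p-b_t\log(1-p)$, observe strict convexity on the compact convex chain, and then identify the minimizer with the $\max$--$\min$ formula via the pooled block minimizer $r'_{i,j}$ and the mediant inequality --- is a legitimate and standard modern approach (it is essentially the min--max characterization of isotonic regression). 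The first paragraph is correct and does deliver existence and uniqueness cleanly; note this is exactly the portion of the theorem that the rest of the paper actually relies on (Corollary 2 uses uniqueness to argue that a solution valid for all RBPSRs, if it exists, must coincide with the logarithmic-rule solution).

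The gap is that the substance of the theorem --- that the unique minimizer is given by $p_t=\max_{1\le i\le t}\min_{t\le j\le T}r'_{i,j}$ and that this equals the $\min$--$\max$ form --- is only announced, not proved. Your steps (ii) and (iii) are precisely where the theorem lives, and the mediant inequality alone does not hand them to you: you must (a) prove the minimax equality for the array $r'_{i,j}$, which is false for general arrays and genuinely uses the averaging structure; (b) show that on each maximal constant run $[i_0,j_0]$ of $p^*$ the common value is $r'_{i_0,j_0}$; and (c) verify the first-order optimality conditions, which require not just that the summed gradient over each block vanishes but that the left-anchored partial sums $\sum_{k=i_0}^{m}g_k'(c)$ have the correct sign for every $m$ in the block (equivalently $r'_{i_0,m}\le r'_{i_0,j_0}\le r'_{m+1,j_0}$ for all interior splits), so that the Lagrange multipliers on the binding constraints $p_t=p_{t+1}$ come out nonnegative. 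Each of these is provable, but they are the whole proof, and the sign bookkeeping in (c) is easy to get backwards. Also be careful with degenerate endpoints: when $a_t=0$ the unconstrained minimizer of $g_t$ sits at $p=0$ where $g_t$ is finite but $g_t'$ is not usable in the naive stationarity argument, so the block-minimizer claim should be argued via monotonicity of $\sum_k g_k'$ on $(0,1)$ rather than by solving a stationarity equation that may have its solution at a boundary. As it stands, your proposal is a correct plan with the decisive combinatorial lemma still owed.
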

\begin{proof}
See\footnote{Available online (with open access) at \url{http://projecteuclid.org/euclid.aoms/1177728423}.}~\cite{ayer1955}, theorem 2.2 and its corollary 2.1. In that work, the monotonicity constraint was non-increasing, rather than the non-decreasing constraint~\eqref{eq:mon_constr} that we use here. The solution that they give therefore has to be transformed by letting the index $t$ go in reverse order, which means exchanging the roles of the subsequence endpoints $i,j$, which then has the result of exchanging the roles of $\max$ and $\min$ in the solution. 
\qquad\end{proof}

We now show that this theorem supplies the solution for the special case of the logarithmic RBPSR: 
\begin{corollary}
\label{corollary2}
If $\bigl(\Crho(\prop_1,p),\Crho(\prop_2,p)\bigr)=\bigl(-\log(p),-\log(1-p)\bigr)$, then the problem of minimizing objective~\eqref{eq:pav_obj}, subject to constraint~\eqref{eq:mon_constr}, has the unique solution, $\pvec_{1,T}=\seq{p}{T}$, where:  
\begin{equation}
\label{eq:pav_maxmin}
\begin{split}
p_t =\PAV_t\bigl((\seq{\ell}{T}),(v_1,v_2)\bigr) &= \max_{1\le i\le t} \; \min_{t\le j\le T} r_{i,j} \\
&= \min_{t\le j\le T} \; \max_{1\le i\le t} r_{i,j},\\
\end{split}
\end{equation}
where
\begin{equation}
r_{i,j} =\frac{m_{i,j}v_1}{m_{i,j}v_1+n_{i,j}v_2}
\end{equation}
where $m_{i,j}$ is the number of $\prop_1$-labels and $n_{i,j}$ the number of $\prop_2$-labels in subsequence $\ell_i,\ell_{i+1},\ldots,\ell_{j}$.
\end{corollary}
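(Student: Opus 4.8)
The plan is to reduce this corollary directly to theorem~\ref{th:ayer} by a monotone change of objective. Since $x\mapsto e^{-x}$ is a strictly decreasing continuous bijection from $[0,\infty]$ onto $[0,1]$ (with the convention $e^{-\infty}=0$), minimizing the objective~\eqref{eq:pav_obj} over the feasible set~\eqref{eq:mon_constr} is equivalent to maximizing $\exp\bigl(-\Obj_{1,T}(\pvec_{1,T})\bigr)$ over the same set, and the two problems have identical minimizers. For the logarithmic rule the exponentiated objective factorizes: writing the sum in~\eqref{eq:pav_obj} as $\sum_{t:\ell_t=\prop_1} v_1(-\log p_t)+\sum_{t:\ell_t=\prop_2} v_2(-\log(1-p_t))$ and exponentiating gives
\begin{equation}
\exp\bigl(-\Obj_{1,T}(\pvec_{1,T})\bigr)=\prod_{t:\ell_t=\prop_1}(p_t)^{v_1}\prod_{t:\ell_t=\prop_2}(1-p_t)^{v_2},
\end{equation}
which has exactly the form of the product $\Obj'_{1,T}$ maximized in theorem~\ref{th:ayer}.

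Next I would match the exponents to Ayer's $a_t,b_t$. Setting $a_t=v_1,\ b_t=0$ whenever $\ell_t=\prop_1$, and $a_t=0,\ b_t=v_2$ whenever $\ell_t=\prop_2$, reproduces the factorized objective above. Because $v_1,v_2>0$, every trial contributes either $a_t=v_1>0$ or $b_t=v_2>0$, so the hypothesis $a_t+b_t>0$ of theorem~\ref{th:ayer} holds for all $t$. Theorem~\ref{th:ayer} then yields a \emph{unique} maximizer given by the max--min (equivalently min--max) formula in the quantities $r'_{i,j}$.

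It then remains to verify $r'_{i,j}=r_{i,j}$. Summing the chosen $a_k$ over the subsequence $\ell_i,\ldots,\ell_j$ counts each $\prop_1$-label with weight $v_1$, giving $\sum_{k=i}^j a_k=m_{i,j}v_1$, and likewise $\sum_{k=i}^j b_k=n_{i,j}v_2$. Hence $r'_{i,j}=m_{i,j}v_1/(m_{i,j}v_1+n_{i,j}v_2)=r_{i,j}$, so the solution of theorem~\ref{th:ayer} specializes to~\eqref{eq:pav_maxmin} and the uniqueness claim transfers directly.

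The argument is essentially bookkeeping, so I do not expect a deep obstacle; the one point needing care is the boundary behaviour of the logarithm, i.e.\ the value $\Obj_{1,T}=\infty$ that occurs when some $p_t\in\set{0,1}$. This is precisely why the extended convention $e^{-\infty}=0$ is used: an infinite cost maps to a zero product, the worst possible value of the maximization, so the maximizer of the product problem never assigns $p_t=0$ to a $\prop_1$-trial nor $p_t=1$ to a $\prop_2$-trial. This matches the finiteness caveat in the definition of $\Crho$ and guarantees that the two optimization problems remain equivalent (with the same unique optimum) even at the boundary.
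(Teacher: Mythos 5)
Your proposal is correct and follows essentially the same route as the paper: exponentiate the objective to obtain Ayer's product form, choose $(a_t,b_t)=(v_1,0)$ or $(0,v_2)$ according to the label, and check $r'_{i,j}=r_{i,j}$ so that the unique maximizer of theorem~\ref{th:ayer} transfers to the minimization problem. Your additional care about the boundary convention $e^{-\infty}=0$ is a sound (if implicit in the paper) refinement rather than a departure.
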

\begin{proof}
Observe that if we let
\begin{align*}
(a_t,b_t) &= 
  \begin{cases}
  (v_1,0),& \text{if $\ell_t=\prop_1$}, \\
  (0,v_2),& \text{if $\ell_t=\prop_2$},
  \end{cases} 
\end{align*}
then $r'_{i,j}=r_{i,j}$, so that  $\Obj'_{1,T}(\pvec_{1,T})=\operatorname{exp}\bigl(-\Obj_{1,T}(\pvec_{1,T})\bigr)$, so that the constrained maximization of theorem~\ref{th:ayer} and the constrained minimization of this corollary have the same solution.
\qquad\end{proof}

This corollary gives a closed-form solution, \eqref{eq:pav_maxmin}, to the problem, and from~\cite{ayer1955} we know that this is the same solution which is calculated by the iterative PAV algorithm\footnote{The PAV algorithm, if efficiently implemented, is known~\cite{pardalos,ahuja_orlin_01,synergy} to have \emph{linear} computational load (of order $T$), which is superior to a straight-forward implementation of the explicit form~\eqref{eq:pav_maxmin}.}. As noted above, it has so far~\cite{ayer1955,ahuja_orlin_98,best} only been shown that this solution is valid for logarithmic and other RBPSR's which have \emph{convex} expectations. In the following sections we show that this solution is also optimal for all other RBPSR's. 

\subsection{Decomposition into subproblems}
We need to consider \emph{subsequences} of $(1,T)$: For any $1\le i\le j\le T$, we denote as $(i,j)$ the subsequence of $(1,T)$ which starts at index $i$ and ends at index $j$. We may compute a partial objective function over a subsequence $(i,j)$ as:
\begin{equation}
\label{eq:partobj}
\Obj_{i,j}(\pvec_{i,j}) = \sum_{t=i}^j \wsf(\ell_t) \Crho(\ell_t,p_t).
\end{equation}
where $\pvec_{i,j}=p_i,p_{i+1},\ldots,p_j$. We can now define the \emph{subproblem} $(i,j)$ as the problem of minimizing $\Obj_{i,j}(\pvec_{i,j})$, simultaneosly for every RBPSR, and subject to the monotonicity constraint $0\le p_i\le p_{i+1}\le\cdots\le p_j\le1$. In what follows, we shall use the following notational conventions:
\begin{remunerate}
	\item The subproblem $(1,T)$ is equivalent to the original problem.
  \item We shall denote a subproblem solution, $\pvec_{i,j}$, as \emph{feasible} when the monotonicity constraint is met and \emph{non-feasible} otherwise.
  \item By \emph{subproblem solution} we mean just a sequence $\pvec_{i,j}$, feasible or not, such that $p_i,p_{i+1},\ldots,p_j\in[0,1]$.
  \item Since any subproblem is isomorphic to the original problem, corollary~\ref{corollary2} also shows that if\footnote{The object of this whole exercise is to prove that the optimal solution exists for every subproblem and is given by the PAV algorithm, but until we have proved this, we cannot assume that the optimal solution exists for every subproblem.} it has a feasible minimizing solution for every RBPSR, then that solution must be unique. Hence, by \emph{the optimal subproblem solution}, we mean the unique feasible solution that minimizes $\Obj_{i,j}(\cdot)$, for every RBPSR.
  \item By a \emph{partitioning} of the problem $(1,T)$ into a set, $\mathcal{S}$, of adjacent, non-overlapping subproblems, we mean that every index occurs exactly once in all of the subproblems, so that:
\begin{equation}
\label{eq:totalobjsum}
\Obj_{1,T}(\pvec_{1,T}) 
= \sum_{(i,j)\in\mathcal{S}} \Obj_{i,j}(\pvec_{i,j})
\end{equation}   
\end{remunerate}
Our first important step is to show with theorem~\ref{theorem1}, proved via lemmas~\ref{lemma1a} and~\ref{lemma2a}, how the optimal total solution may be constituted from optimal subproblem solutions:      
\begin{lemma}
\label{lemma1a}
For a given RBPSR and for a given partitioning, $\mathcal{S}$, of $(1,T)$ into subproblems, let:
\begin{remunerate}
	\item[(i)] $\pvec^*_{1,T}=\seq{p^*}{T}$ be a feasible solution to the whole problem, with minimum total objective $\Obj_{1,T}(\pvec^*_{1,T})$; and
	\item[(ii)] for every subproblem $(i,j)\in\mathcal{S}$, let $\qvec^*_{i,j}=q^*_i,q^*_{i+1},\ldots,q^*_j$ denote a feasible subproblem solution with minimum partial objective $\Obj_{i,j}(\qvec^*_{i,j})$; and
	\item[(iii)] $\qvec^*_{1,T}=\seq{q^*}{T}$ denote the concatenation of all the subproblem solutions $\qvec^*_{i,j}$, in  order, to form a (not necessarily feasible) solution to the whole problem $(1,T)$, 
\end{remunerate}
then
\begin{equation}
\label{eq:lem4}
\Obj_{1,T}(\qvec^*_{1,T}) 
=   \sum_{(i,j)\in\mathcal{S}} \Obj_{i,j}(\qvec^*_{i,j})
\le \sum_{(i,j)\in\mathcal{S}} \Obj_{i,j}(\pvec^*_{i,j})
= \Obj_{1,T}(\pvec^*_{1,T}). 
\end{equation}
\end{lemma}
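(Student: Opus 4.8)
The plan is to establish the chain in~\eqref{eq:lem4} by treating its two equalities and its central inequality separately. Both equalities are immediate consequences of the additivity of the objective over a partitioning, as recorded in~\eqref{eq:totalobjsum}: since $\Obj_{1,T}$ is a sum of the per-index terms $\wsf(\ell_t)\Crho(\ell_t,p_t)$, and since the partitioning $\mathcal{S}$ covers every index exactly once, evaluating $\Obj_{1,T}$ on any concatenation of subproblem solutions simply sums the corresponding partial objectives. Applying this to the concatenation $\qvec^*_{1,T}$ gives the leftmost equality, and applying it to the globally feasible solution $\pvec^*_{1,T}$, whose restriction to $(i,j)$ I write as $\pvec^*_{i,j}$, gives the rightmost equality.

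For the central inequality, first I would observe that the restriction $\pvec^*_{i,j}=p^*_i,\ldots,p^*_j$ of the globally feasible solution is itself a feasible solution to the subproblem $(i,j)$. This is because global feasibility means $0\le p^*_1\le\cdots\le p^*_T\le1$, and any contiguous block of this chain still satisfies $0\le p^*_i\le\cdots\le p^*_j\le1$, which is precisely the monotonicity constraint of subproblem $(i,j)$. Now $\qvec^*_{i,j}$ was assumed to minimize $\Obj_{i,j}(\cdot)$ over all feasible subproblem solutions; since $\pvec^*_{i,j}$ is one such feasible competitor, optimality yields $\Obj_{i,j}(\qvec^*_{i,j})\le\Obj_{i,j}(\pvec^*_{i,j})$ for each $(i,j)\in\mathcal{S}$. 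Summing these inequalities over the partitioning produces the central inequality, completing the chain.

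The argument is essentially bookkeeping, so there is no deep obstacle. The one point that requires care, and the only place the monotonicity constraint is genuinely used, is the observation that a contiguous restriction of a globally feasible (monotone) solution remains feasible for the corresponding subproblem; this is what licenses the comparison against the subproblem-optimal $\qvec^*_{i,j}$. I would also note that the lemma is stated for a \emph{fixed} RBPSR, so the per-subproblem optimality of $\qvec^*_{i,j}$ can be invoked directly, and no simultaneous-over-all-RBPSR reasoning is needed at this stage.
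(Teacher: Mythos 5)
Your proof is correct and follows essentially the same route as the paper's: both equalities come from the additivity of the objective over the partitioning~\eqref{eq:totalobjsum}, and the central inequality comes from observing that the restriction $\pvec^*_{i,j}$ of the globally feasible solution is a feasible competitor for subproblem $(i,j)$, so the subproblem minimizer $\qvec^*_{i,j}$ can do no worse. The paper merely phrases the feasibility observation as the global solution being subject to \emph{extra} constraints $p^*_{i-1}\le p^*_i$ and $p^*_j\le p^*_{j+1}$, which is the same point.
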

\begin{proof}
Follows by recalling~\eqref{eq:totalobjsum} and by noting that for every $(i,j)$, $\Obj_{i,j}(\qvec^*_{i,j})\le\Obj_{i,j}(\pvec^*_{i,j})$, because (except at $i=1$ and $j=T$) minimization of the RHS is subject to the extra constraints $p^*_{i-1}\le p^*_i$ and  $p^*_j\le p^*_{j+1}$. 
\qquad\end{proof}

\begin{lemma}
\label{lemma2a}
For a given RBPSR and for a given partitioning, $\mathcal{S}$, of $(1,T)$ into subproblems, let $\pvec^*_{1,T}=\seq{p^*}{T}$ be a feasible solution to the whole problem, with minimum total objective $\Obj_{1,T}(\pvec^*_{1,T})$; and let $\qvec_{1,T}=\seq{q}{T}$ be any feasible solution to the whole problem, with total objective $\Obj_{1,T}(\qvec_{1,T})$. Then 
\begin{equation}
\label{eq:lem5}
\Obj_{1,T}(\qvec_{1,T}) = \sum_{(i,j)\in\mathcal{S}} \Obj_{i,j}(\qvec_{i,j})
\ge \Obj_{1,T}(\pvec^*_{1,T}). 
\end{equation}
\end{lemma}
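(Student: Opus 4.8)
The plan is to split the claim into its two constituent parts---the equality and the inequality---each of which follows directly from something already established, so that no new machinery is needed.

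First I would establish the equality $\Obj_{1,T}(\qvec_{1,T}) = \sum_{(i,j)\in\mathcal{S}} \Obj_{i,j}(\qvec_{i,j})$. This is nothing more than the defining additivity property of a partitioning, namely~\eqref{eq:totalobjsum}, applied to the particular feasible solution $\qvec_{1,T}$. Since $\mathcal{S}$ partitions $(1,T)$ into adjacent, non-overlapping subproblems, every index $t$ occurs in exactly one subproblem, so the single sum defining $\Obj_{1,T}$ in~\eqref{eq:pav_obj} regroups without remainder into the per-subproblem sums defining each $\Obj_{i,j}$ in~\eqref{eq:partobj}, where $\qvec_{i,j}$ denotes the restriction of $\qvec_{1,T}$ to the indices of that subproblem.

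Next I would obtain the inequality $\Obj_{1,T}(\qvec_{1,T}) \ge \Obj_{1,T}(\pvec^*_{1,T})$ straight from the hypothesis. By assumption $\pvec^*_{1,T}$ is a feasible solution attaining the \emph{minimum} total objective over all feasible solutions of the whole problem (for the fixed RBPSR under consideration), whereas $\qvec_{1,T}$ is merely \emph{some} feasible solution. Minimality of $\pvec^*_{1,T}$ therefore forces $\Obj_{1,T}(\qvec_{1,T})$ to be no smaller, which is exactly the stated inequality. Chaining the equality and the inequality then yields~\eqref{eq:lem5}.

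I expect no genuine obstacle here, since the result is essentially a restatement of the definitions of partitioning and of a minimizing feasible solution; its real role is to supply, together with lemma~\ref{lemma1a}, the two opposing bounds needed for theorem~\ref{theorem1}. The only points requiring care are to keep the RBPSR fixed throughout---both $\pvec^*_{1,T}$ and all the objectives are tied to one particular $\Crho$---and to note that the minimality of $\pvec^*_{1,T}$ may legitimately be invoked precisely because $\qvec_{1,T}$ is feasible, so that $\qvec_{1,T}$ lies in the set over which $\pvec^*_{1,T}$ is minimal.
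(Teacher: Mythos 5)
Your proposal is correct and matches the paper's proof, which likewise derives the equality from the partitioning identity~\eqref{eq:totalobjsum} and the inequality directly from the premise that $\pvec^*_{1,T}$ minimizes the total objective over all feasible solutions, of which $\qvec_{1,T}$ is one. Your added remarks about keeping the RBPSR fixed and about feasibility of $\qvec_{1,T}$ being what licenses the comparison are accurate and consistent with the paper's usage.
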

\begin{proof}
Follows directly from~\eqref{eq:totalobjsum} and the premise.  
\qquad\end{proof}

\begin{theorem}
\label{theorem1}
Let $\qvec^*_{1,T}=\seq{q^*}{T}$ be a \emph{feasible} solution for $(1,T)$ and let $\mathcal{S}$ be a partitioning of $(1,T)$ into subproblems, such that for every $(i,j)\in\mathcal{S}$, the subsequence $\qvec^*_{i,j}=q^*_i,q^*_{i+1},\ldots,q^*_j$ is the optimal solution to subproblem $(i,j)$, then $\qvec^*_{1,T}$ is the optimal solution to the whole problem $(1,T)$.
\end{theorem}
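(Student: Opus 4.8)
The plan is to prove optimality by a \emph{sandwich} argument that plays Lemma~\ref{lemma1a} against Lemma~\ref{lemma2a}, working one RBPSR at a time. Fix an arbitrary RBPSR $\Crho$ and let $\pvec^*_{1,T}$ denote a feasible whole-problem solution of minimum total objective for this $\Crho$. The hypothesis of the theorem supplies, for each $(i,j)\in\mathcal{S}$, the optimal subproblem solution $\qvec^*_{i,j}$, which is by definition the unique feasible minimizer of $\Obj_{i,j}(\cdot)$; these are exactly the minimum-objective feasible subproblem solutions that Lemma~\ref{lemma1a}(ii) requires, and their in-order concatenation is precisely the given feasible sequence $\qvec^*_{1,T}$.

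First I would invoke Lemma~\ref{lemma1a} to obtain $\Obj_{1,T}(\qvec^*_{1,T})\le\Obj_{1,T}(\pvec^*_{1,T})$. Next, since $\qvec^*_{1,T}$ is feasible by hypothesis, I would apply Lemma~\ref{lemma2a} with its arbitrary feasible solution taken to be $\qvec^*_{1,T}$ itself, which yields the reverse inequality $\Obj_{1,T}(\qvec^*_{1,T})\ge\Obj_{1,T}(\pvec^*_{1,T})$. The two bounds squeeze to $\Obj_{1,T}(\qvec^*_{1,T})=\Obj_{1,T}(\pvec^*_{1,T})$, so the feasible sequence $\qvec^*_{1,T}$ attains the minimum total objective and is therefore an optimal whole-problem solution for $\Crho$. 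Because the RBPSR was arbitrary and the optimal subproblem solutions are, by the stated convention, simultaneously optimal for every RBPSR, the \emph{same} sequence $\qvec^*_{1,T}$ minimizes $\Obj_{1,T}$ for every RBPSR at once; corollary~\ref{corollary2} then guarantees that at most one feasible sequence can be optimal for every RBPSR simultaneously, which pins $\qvec^*_{1,T}$ down as \emph{the} optimal solution to $(1,T)$.

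The step that needs the most care is justifying the premise that a minimizing $\pvec^*_{1,T}$ exists for each RBPSR, since Lemmas~\ref{lemma1a} and~\ref{lemma2a} both take such a $\pvec^*_{1,T}$ as given, yet a non-strict RBPSR (e.g.\ the Dirac-delta threshold cost) has a discontinuous, merely quasiconvex objective whose minimum over the feasible chain need not obviously be attained. I would guard against this by observing that the sandwich can in fact be run without assuming $\pvec^*_{1,T}$: for any feasible $\qvec_{1,T}$ the partition identity~\eqref{eq:totalobjsum} gives $\Obj_{1,T}(\qvec_{1,T})=\sum_{(i,j)\in\mathcal{S}}\Obj_{i,j}(\qvec_{i,j})\ge\sum_{(i,j)\in\mathcal{S}}\Obj_{i,j}(\qvec^*_{i,j})=\Obj_{1,T}(\qvec^*_{1,T})$, where the middle inequality is termwise subproblem optimality (each restriction $\qvec_{i,j}$ being a feasible solution of subproblem $(i,j)$). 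This direct chain shows $\qvec^*_{1,T}$ minimizes $\Obj_{1,T}$ over all feasible solutions without presupposing existence of the whole-problem optimum, and it is exactly the content one gets by composing the two lemmas. I therefore expect this existence/quasiconvexity bookkeeping, rather than the squeeze itself, to be the only genuine obstacle.
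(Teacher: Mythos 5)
Your proof is correct and follows essentially the same route as the paper: it sandwiches $\Obj_{1,T}(\qvec^*_{1,T})$ between the two bounds of Lemma~\ref{lemma1a} and Lemma~\ref{lemma2a} (applied with the arbitrary feasible solution taken to be $\qvec^*_{1,T}$ itself) for each RBPSR separately, and then invokes Corollary~\ref{corollary2} to upgrade ``optimal for every RBPSR'' to ``the unique optimal solution.'' Your closing observation---that the chain $\Obj_{1,T}(\qvec_{1,T})=\sum_{(i,j)\in\mathcal{S}}\Obj_{i,j}(\qvec_{i,j})\ge\sum_{(i,j)\in\mathcal{S}}\Obj_{i,j}(\qvec^*_{i,j})=\Obj_{1,T}(\qvec^*_{1,T})$ holds for every feasible $\qvec_{1,T}$ directly, so that one need not presuppose the existence of a whole-problem minimizer $\pvec^*_{1,T}$---is a legitimate tidying-up of a point the paper's proof leaves implicit, but it does not change the substance of the argument.
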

\begin{proof}
The premises make lemmas~\ref{lemma1a} and~\ref{lemma2a} applicable, for every RBPSR. Since both inequalities~\eqref{eq:lem4} and~\eqref{eq:lem5} are satisfied, $\Obj_{1,T}(\qvec^*_{1,T}) = \Obj_{1,T}(\pvec^*_{1,T})$, where $\pvec^*_{1,T}$ is an optimal solution for each RBPSR. Hence $\qvec^*_{1,T}$ is optimal for every RBPSR and is by corollary~\ref{corollary2} the unique optimal solution.  
\qquad\end{proof}

\subsection{Constant subproblem solutions}
In what follows constant subproblem solutions will be of central importance. A solution $\pvec_{i,j}$ is constant if $p_i=p_{i+1}=\cdots=p_j=q$, for some $0\le q\le1$. In this case, we use the short-hand notation $\Obj_{i,j}(q)=\Obj_{i,j}(\pvec_{i,j})$ to denote the subproblem objective, and this may be expressed as: 
\begin{equation}
\label{eq:objq}
\begin{split}
\Obj_{i,j}(q)=\Obj_{i,j}(\pvec_{i,j})&=\sum_{t=i}^j \wsf(\ell_t) \Crho(\ell_t,q)\\
&=m v_1\Crho(\prop_1,q) +n v_2\Crho(\prop_2,q),
\end{split}
\end{equation}
where $m$ is the number of $\prop_1$-labels and $n$ the number of $\prop_2$-labels. Note:
\begin{remunerate}
	\item A constant subproblem solution is always \emph{feasible}.
	\item If it exists, the optimal solution to an arbitrary subproblem may or may not be constant. 
\end{remunerate}
Whether optimal or not, it is important to examine the behaviour of subproblem solutions that are constrained to be constant. This behaviour is governed by the quasiconvex\footnote{A real-valued function $f(p)$, defined on a real interval is \emph{quasiconvex}, if every sublevel set of the form $\set{p|f(p)<a}$ is convex (i.e. a real interval)~\cite{avriel}. Lemma~\ref{lemma3} shows that $\Obj_{i,j}(q)$ is \emph{quasiconvex}.} properties of $\Obj_{i,j}(q)$ as summarized in the following lemma:
\begin{lemma}
\label{lemma3}
Let $r_{i,j}=\frac{v_1m}{v_1m+v_2n}$, where $m$ is the number of $\prop_1$-labels and $n$ the number of $\prop_2$-labels in the subsequence $(i,j)$, and let $\Obj_{i,j}(q)=m v_1\Crho(\prop_1,q)+nv_2\Crho(\prop_2,q)$ be the objective for the constant subproblem solution, $p_i=p_{i+1}=\cdots=p_j=q$, then the following properties hold, where $\Crho$ is \emph{any} RBPSR, and where we also note the specialization for strict RBPSR's:
\begin{remunerate}
	\item\label{lem1prop1} If $q\le q'\le r_{i,j}$, then $\Obj_{i,j}(q)\ge \Obj_{i,j}(q')\ge\Obj_{i,j}(r_{i,j})$.
\begin{description}
  \item[strict case:] If $q<q'\le r_{i,j}$, then $\Obj_{i,j}(q)> \Obj_{i,j}(q')$.
\end{description}
 	\item\label{lem1prop2} If $q'\ge q\ge r_{i,j}$, then $\Obj_{i,j}(q')\ge \Obj_{i,j}(q)\ge\Obj_{i,j}(r_{i,j})$.
\begin{description}
  \item[strict case:] If $q'>q\ge r_{i,j}$, then $\Obj_{i,j}(q')> \Obj_{i,j}(q)$.
\end{description}
	\item\label{lem1prop3} $\min_q\Obj_{i,j}(q) = \Obj_{i,j}(r_{i,j})$, 
\begin{description}
  \item[strict case:] $q=r_{i,j}$ is the unique minimum.
\end{description}
(This is the \emph{salient property} of binary proper scoring rules, which was mentioned above.)
\end{remunerate}
\end{lemma}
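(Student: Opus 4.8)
The plan is to establish that $\Obj_{i,j}(q)$ is non-increasing on $[0,r_{i,j}]$ and non-decreasing on $[r_{i,j},1]$, which immediately yields all three properties (quasiconvexity with the minimum located at $r_{i,j}$). Rather than differentiating---which would require $\rho$ to be a smooth density and would exclude exactly the non-convex examples (the Dirac-delta misclassification costs) that motivate this paper---I would work directly with the \emph{difference} of two objective values through the integral definitions~\eqref{eq:c_rbpsr}.

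First I would fix any $0<q<q'<1$, so that all four integrals involved are finite by condition~(i), and compute, using the definitions of $\Crho(\prop_1,\cdot)$ and $\Crho(\prop_2,\cdot)$,
\begin{equation*}
\Obj_{i,j}(q') - \Obj_{i,j}(q) = -m v_1\!\int_q^{q'}\!\frac{\rho(\eta)}{\eta}\,d\eta + n v_2\!\int_q^{q'}\!\frac{\rho(\eta)}{1-\eta}\,d\eta = \int_q^{q'}\frac{\eta(m v_1 + n v_2) - m v_1}{\eta(1-\eta)}\,\rho(\eta)\,d\eta.
\end{equation*}
The decisive observation is that the rational factor multiplying $\rho(\eta)$ vanishes exactly at $\eta=r_{i,j}=\frac{m v_1}{m v_1+n v_2}$, is strictly negative for $\eta<r_{i,j}$, and strictly positive for $\eta>r_{i,j}$, while $\eta(1-\eta)>0$ and $\rho(\eta)\ge0$ throughout $(0,1)$.

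With this sign structure the three properties follow. For property~\ref{lem1prop1}, if $q\le q'\le r_{i,j}$ the integrand is $\le0$ on $(q,q')$, so $\Obj_{i,j}(q')-\Obj_{i,j}(q)\le0$; applying the same argument on $(q',r_{i,j})$ gives the second inequality. Property~\ref{lem1prop2} is symmetric, using that the integrand is $\ge0$ above $r_{i,j}$. Property~\ref{lem1prop3} then follows by taking $q'=r_{i,j}$ in property~\ref{lem1prop1} and $q=r_{i,j}$ in property~\ref{lem1prop2}. For the strict cases, $\rho(\eta)>0$ almost everywhere forces the integrals above to be strictly signed whenever $q<q'$, since the sign factor is nonzero except on the single point $\{r_{i,j}\}$, giving strict monotonicity and hence the unique minimizer.

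The main obstacle I anticipate is not the sign computation but the careful treatment of the admissible endpoints. The difference formula is valid only where both objectives are finite, i.e. on the open interval, because condition~(i) permits $\Crho(\prop_1,0)$ and $\Crho(\prop_2,1)$ to be infinite. I would therefore handle $q=0$ (where $\Obj_{i,j}(0)$ may be $+\infty$ when $m>0$) and $q'=1$ (where $\Obj_{i,j}(1)$ may be $+\infty$ when $n>0$) separately, observing that an infinite left-hand side satisfies the required $\ge$ inequalities trivially, and that $\Obj_{i,j}(r_{i,j})$ is always finite: it equals $0$ in the degenerate cases $m=0$ or $n=0$ where $r_{i,j}\in\{0,1\}$, and otherwise is a finite combination of interior integral values. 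A second point requiring care is that $\rho$ may carry atoms (the Dirac-delta examples); the integral formulation accommodates these without modification for the non-strict inequalities, and the strict case excludes them by the standing assumption that $\rho>0$ almost everywhere.
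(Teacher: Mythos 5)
Your proposal is correct and follows essentially the same route as the paper's own proof: both compute the difference of objective values from the integral definitions~\eqref{eq:c_rbpsr} and read off the sign of the integrand, which changes at $\eta=r_{i,j}$ (your integrand is exactly $(mv_1+nv_2)$ times the paper's~\eqref{eq:Delta_e}, since the paper first normalizes $\Obj_{i,j}$ to the expectation $e(q)$). Your additional care with the possibly infinite endpoint values $\Crho(\prop_1,0)$ and $\Crho(\prop_2,1)$ is a sound refinement that the paper leaves implicit.
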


\begin{proof}
For convenience in this proof, we drop the subscripts $i,j$, letting $r=r_{i,j}=\frac{mv_1}{mv_1+nv_2}$. The expected value of $\Crho(\prop,q)$ w.r.t. probability $r$ is: 
\begin{equation}
\label{eq:eq}
\begin{split}
e(q) &= \Exs{\prop}{r}{\Crho(\prop,q)} 
= \tfrac{1}{mv_1+nv_2}\Obj_{i,j}(q)\\ 
&= r \Crho(\prop_1,q)+(1-r)\Crho(\prop_2,q)
\end{split}
\end{equation}
Clearly, if the above properties hold for $e(q)$, then they will also hold for $\Obj_{i,j}(q)$. We prove these properties for $e(q)$ by letting $q \le q'$ and by examining the sign of $\Delta_e = e(q')-e(q)$: If $q'=q$, then $\Delta_e=0$. If $q<q'$, then~\eqref{eq:c_rbpsr} gives:
\begin{equation}
\label{eq:Delta_e}
\Delta_e = \int_q^{q'} (\eta-r) \frac{\rho(\eta)}{\eta(1-\eta)} \,d\eta
\end{equation}
The non-strict versions of properties 1,2 and 3 now follow from the following observation: Since $\rho(\eta)\ge0$ for $0\le \eta \le 1$, the sign of the integrand and therefore of $\Delta_e$ depends solely on the sign of $(\eta-r)$, giving:
\begin{romannum}
	\item $\Delta_e \ge 0$, if $r\le q<q'$.
	\item $\Delta_e \le 0$, if $q<q'\le r$.
\end{romannum}
If more specifically, $\rho(\eta)>0$ \emph{almost everywhere}, then for any $0\le q < q' \le 1$, we have $|\Delta_e|>0$. In this case, the RBPSR is denoted \emph{strict} and we have:
\begin{romannum}
	\item $\Delta_e > 0$, if $r\le q<q'$.
	\item $\Delta_e < 0$, if $q<q'\le r$. 
\end{romannum}
which concludes the proof also for the strict cases.\qquad\end{proof}

For now, we need only  property~\ref{lem1prop3} to proceed. We use the other properties later. The optimal constant subproblem solution is characterized in the following theorem:
\begin{theorem}
\label{theorem2}
If the optimal solution to subproblem $(i,j)$ is constant, then: 
\begin{remunerate}
	\item \label{item:th1_it1} The constant is $r_{i,j}$. 
  \item \label{item:th1_it2} For any index $k$, such that $i\le k\le j$, the following are both true: 
\begin{romannum}
	\item \label{item:th1_it2a}$r_{i,k} \ge r_{i,j}$
	\item \label{item:th1_it2b}$r_{k,j} \le r_{i,j}$
\end{romannum}
where $r_{i,k}$ and $r_{k,j}$ are defined in a similar way to $r_{i,j}$, but for the subproblems $(i,k)$ and $(k,j)$. 
\end{remunerate}
\end{theorem}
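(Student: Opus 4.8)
The plan is to handle the two parts separately: part~\ref{item:th1_it1} follows almost immediately from the minimization property of Lemma~\ref{lemma3}, while part~\ref{item:th1_it2} is obtained by a ``splitting'' argument that exhibits a strictly cheaper, non-constant, but still feasible solution whenever one of the claimed inequalities fails. Throughout, the key structural fact I would exploit is that the hypothesis is phrased as optimality for \emph{every} RBPSR simultaneously, which lets me test against a convenient \emph{strict} rule.

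For part~\ref{item:th1_it1}, I would reason as follows. By hypothesis the optimal solution to subproblem $(i,j)$ is constant, say equal to $q$, and being optimal it minimizes $\Obj_{i,j}(\cdot)$ over all feasible solutions, simultaneously for every RBPSR. Fixing any strict RBPSR, for instance the Brier rule $\rho(\eta)=6\eta(1-\eta)$, the value $q$ must in particular minimize $\Obj_{i,j}$ over the constant solutions, which (as noted in the text) are always feasible and hence form a subset of the candidates. The strict case of property~\ref{lem1prop3} of Lemma~\ref{lemma3} says that the unique constant minimizer is $r_{i,j}$, so $q=r_{i,j}$.

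For part~\ref{item:th1_it2}, I would prove~\ref{item:th1_it2a} by contradiction; part~\ref{item:th1_it2b} is entirely symmetric. The inequality is trivial at $k=j$, so I assume $i\le k<j$ and, for contradiction, that $r_{i,k}<r_{i,j}$. I then build the two-level candidate that is constant at $r_{i,k}$ on the prefix $(i,k)$ and constant at $r_{i,j}$ on the suffix $(k+1,j)$. Because $r_{i,k}<r_{i,j}$ this candidate is monotone, hence feasible, and by additivity of the objective across the partition $\{(i,k),(k+1,j)\}$ (cf.~\eqref{eq:totalobjsum}) its cost differs from that of the constant-$r_{i,j}$ solution only in the prefix term, the difference being $\Obj_{i,k}(r_{i,k})-\Obj_{i,k}(r_{i,j})$. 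Since $r_{i,k}$ is the minimizer of $\Obj_{i,k}(\cdot)$ and $r_{i,j}>r_{i,k}$, the strict case of property~\ref{lem1prop2} of Lemma~\ref{lemma3}, applied to subproblem $(i,k)$, makes this difference strictly negative for the chosen strict RBPSR. The candidate therefore strictly beats the assumed-optimal constant solution for that rule, a contradiction, forcing $r_{i,k}\ge r_{i,j}$. For~\ref{item:th1_it2b} I would instead perturb the suffix $(k,j)$ up to $r_{k,j}$ while freezing the prefix $(i,k-1)$ at $r_{i,j}$, and invoke the strict case of property~\ref{lem1prop1}.

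The main obstacle here is conceptual rather than computational. One must recognize that the weak directional inequalities of Lemma~\ref{lemma3} do not by themselves yield a contradiction, and that the hypothesis ``optimal for \emph{every} RBPSR'' is precisely what permits passing to a strict rule in order to upgrade those weak inequalities to the \emph{strict} improvement needed. The only routine points to verify are that the spliced two-level solution is genuinely feasible, which is exactly where the assumed strict inequality $r_{i,k}<r_{i,j}$ is used, and that the unperturbed block's contribution is unchanged so that the entire cost gap is carried by the single perturbed block.
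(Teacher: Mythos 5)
Your proposal is correct and follows essentially the same route as the paper: part 1 from the (strict) minimization property of Lemma~\ref{lemma3}, and part 2 by contradiction via a feasible two-level splice whose cost strictly beats the constant solution for a strict RBPSR. The only cosmetic difference is that you invoke the strict cases of properties~\ref{lem1prop1}/\ref{lem1prop2} for the strict improvement where the paper cites the strict case of property~\ref{lem1prop3}; these are interchangeable here.
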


\begin{proof}
Property~\ref{item:th1_it1} of this theorem follows directly from property~\ref{lem1prop3} of lemma~\ref{lemma3}.
To prove property~\ref{item:th1_it2}, we use contradiction: If the negation of~2(i) were true, namely $r_{i,k} < r_{i,j}$, then the non-constant solution $p_i=\cdots=p_k=r_{i,k}<p_{k+1}=\cdots=p_j=r_{i,j}$ would be feasible and (by property~\ref{lem1prop3} of lemma~\ref{lemma3}) would have lower objective, namely $\Obj_{i,k}(r_{i,k})+\Obj_{k+1,j}(r_{i,j})$, for any strict RBPSR, than that of the constant solution, namely $\Obj_{i,k}(r_{i,j})+\Obj_{k+1,j}(r_{i,j})$. This contradicts the premise that the optimal solution is constant, so that~\ref{item:th1_it2}(i) must be true. Property~2(ii) is proved by a similar contradiction. 
\qquad\end{proof}

\subsection{Pooling adjacent constant solutions}
This section shows (using lemmas~\ref{lemma4} and~\ref{lemma5} to prove theorem~\ref{theorem3}) when and how optimal constant subproblem solutions may be assembled by pooling smaller adjacent constant solutions:
\begin{lemma}
\label{lemma4}
Given a subproblem $(i,j)$, for which the optimal solution is constant (at $r_{i,j}$), we can form the \emph{augmented subproblem}, with the additional constraint that the solution at $j$ must satisfy $p_j\le\alpha$, for some $\alpha$ such that $0\le\alpha<r_{i,j}$. That is, the solution to the augmented subproblem must satisfy $0\le p_i\le p_{i+1}\le\cdots\le p_j\le\alpha<r_{i,j}$. Then the augmented subproblem solution is  optimized, for every RBPSR, by the constant solution $p_i=p_{i+1}=\cdots=p_j=\alpha$.
\end{lemma}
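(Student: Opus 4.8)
The plan is to prove optimality directly: I would show that the constant solution $p_i=\cdots=p_j=\alpha$ has objective no larger than that of \emph{any} feasible solution of the augmented subproblem, simultaneously for every RBPSR, and then read off uniqueness from corollary~\ref{corollary2}. The two ingredients I would combine are the integral representation~\eqref{eq:c_rbpsr} of $\Crho$ and the consequence of theorem~\ref{theorem2}, which (because the optimal solution of $(i,j)$ is constant) guarantees $r_{i,k}\ge r_{i,j}$ for every prefix $(i,k)$.

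Concretely, let $\pvec_{i,j}$ be any feasible solution with $p_i\le\cdots\le p_j\le\alpha$, and set $\Delta=\Obj_{i,j}(\pvec_{i,j})-\Obj_{i,j}(\alpha)$. Using~\eqref{eq:c_rbpsr}, each $\prop_1$-labelled index contributes $v_1\int_{p_t}^{\alpha}\frac{\rho(\eta)}{\eta}\,d\eta\ge0$ and each $\prop_2$-labelled index contributes $-v_2\int_{p_t}^{\alpha}\frac{\rho(\eta)}{1-\eta}\,d\eta\le0$, so the two label types pull $\Delta$ in opposite directions and the whole point is that the net effect is nonnegative. I would establish this by writing each $\int_{p_t}^{\alpha}$ as $\int_0^{\alpha}$ against the indicator $\mathbf{1}[\eta\ge p_t]$ and interchanging the finite sum with the integral, turning $\Delta$ into a single integral over $\eta\in[0,\alpha]$ whose integrand is $\rho(\eta)$ times $\frac{v_1 m(\eta)}{\eta}-\frac{v_2 n(\eta)}{1-\eta}$, where $m(\eta)$ and $n(\eta)$ count the $\prop_1$- and $\prop_2$-labels among the indices $t$ with $p_t\le\eta$.

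The key step is to read off the sign of this integrand. Because $\pvec_{i,j}$ is monotone, $\set{t\mid p_t\le\eta}$ is a prefix $(i,k)$ of the subsequence, and the bracket is nonnegative exactly when $\eta\le r_{i,k}=\frac{v_1 m(\eta)}{v_1 m(\eta)+v_2 n(\eta)}$. Here theorem~\ref{theorem2} supplies $r_{i,k}\ge r_{i,j}$, and since $\eta\le\alpha<r_{i,j}\le r_{i,k}$ throughout the range of integration, the integrand is everywhere nonnegative; as $\rho(\eta)\ge0$ this gives $\Delta\ge0$, proving that constancy at $\alpha$ is optimal for every RBPSR. For a strict RBPSR, any $\pvec_{i,j}$ other than the constant one forces the integrand strictly positive on a set of positive measure where $\rho>0$, so $\Delta>0$ and the optimum is unique, consistent with corollary~\ref{corollary2}.

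The main obstacle I anticipate is the degenerate prefix: if some initial block of the subsequence were entirely $\prop_2$-labels then $m(\eta)=0$, the bracket would be negative, and the sign argument would collapse. This is precisely where the hypothesis $0\le\alpha<r_{i,j}$ is essential, since it forces $r_{i,j}>0$; theorem~\ref{theorem2} then gives $r_{i,k}\ge r_{i,j}>0$, so every non-empty prefix contains at least one $\prop_1$-label and the analysis goes through, while the empty prefix merely contributes $0$. Packaging the sum--integral interchange cleanly, rather than grinding the bookkeeping term by term, is the only other point that needs care.
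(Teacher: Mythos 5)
Your proof is correct, but it takes a genuinely different route from the paper's. The paper argues by an iterative exchange: for any competitor with $p_i<\alpha$ it looks at the maximal constant prefix $p_i=\cdots=p_k$, invokes property 2(i) of theorem~\ref{theorem2} to get $p_k\le\alpha<r_{i,j}\le r_{i,k}$, and then uses the quasiconvexity (property~\ref{lem1prop1} of lemma~\ref{lemma3}) to push that prefix up against the next constraint, incrementing $k$ until the whole block sits at $\alpha$. You instead compute the objective gap $\Delta$ in closed form: the integral representation~\eqref{eq:c_rbpsr} plus linearity of integration turns $\Delta$ into a single integral over $[0,\alpha]$ of $\rho(\eta)$ times $\frac{v_1m(\eta)}{\eta}-\frac{v_2n(\eta)}{1-\eta}$, the sign of which reduces to $\eta\le r_{i,k}$ for the prefix $(i,k)=\set{t\mid p_t\le\eta}$, and theorem~\ref{theorem2} closes the argument exactly as in the paper. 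In effect you generalize the computation~\eqref{eq:Delta_e} from constant to arbitrary monotone competitors and dispose of them in one shot, which buys an explicit formula for the gap, a clean handling of the degenerate all-$\prop_2$ prefix (correctly ruled out by $r_{i,k}\ge r_{i,j}>\alpha\ge0$), and a direct strictness statement for strict RBPSR's, at the cost of a little more bookkeeping; the paper's version is shorter given lemma~\ref{lemma3} but its ``keep incrementing $k$'' step is more informal. One small caveat: corollary~\ref{corollary2} concerns the unaugmented constraint~\eqref{eq:mon_constr} and does not literally cover the augmented problem with $p_j\le\alpha$, so uniqueness should rest on your own strictness argument rather than on that corollary --- but uniqueness is not actually required by the lemma as stated, so nothing is lost.
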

\begin{proof}
Feasible solutions to the augmented subproblem must satisfy either (i) $p_i=\cdots=p_j=\alpha$, or (ii) $p_i<\alpha$. We need to show that there is no feasible solution of type (ii), which has a lower objective value, for any RBPSR, than solution (i). 

For a given solution, let $k$ be an index such that $i\le k\le j$ and $p_i=p_{i+1}=\cdots=p_k$. By combining the premises of this lemma with property~2(i) of theorem~\ref{theorem2}, we find: $p_i=\cdots=p_k \le \alpha < r_{i,j}\le r_{i,k}$, or more succinctly: $p_i=\cdots=p_k \le \alpha < r_{i,k}$. Now the monotonicity property~\ref{lem1prop1} of lemma~\ref{lemma3} shows that the value of $p_i=\cdots=p_k$, which is optimal for all BPSRs must be as large as allowed by the constraints. This means if we start at $k=i$, then $p_i$ is optimized at the constraint $p_i=p_{i+1}$. Next we set $k=i+1$ to see that $p_i=p_{i+1}$ is optimized at the next constraint $p_i=p_{i+1}=p_{i+2}$. We keep incrementing $k$, until we find the optimum for the augmented subproblem at the constant solution $p_i=\cdots=p_j=\alpha$.
\qquad\end{proof}

\begin{lemma}
\label{lemma5}
Given a subproblem $(i,j)$, for which the optimal solution is constant (at $r_{i,j}$), we can form the \emph{augmented subproblem}, with the additional constraint that the solution at $i$ must satisfy $\alpha\le p_i$, for some $\alpha$ such that $r_{i,j}\le\alpha\le1$. That is, the solution to the augmented subproblem must satisfy $r_{i,j}<\alpha\le p_i\le p_{i+1}\le\cdots\le p_j \le 1$. Then the augmented subproblem solution is optimized, for every RBPSR, by the constant solution $p_i=p_{i+1}=\cdots=p_j=\alpha$.
\end{lemma}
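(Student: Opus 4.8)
The plan is to prove this as the exact mirror image of Lemma~\ref{lemma4}, with all orderings reflected: where that proof pushed the left-hand values \emph{up} to a right-endpoint ceiling using the ascending branch (property~\ref{lem1prop1}) of Lemma~\ref{lemma3}, here I would push the right-hand values \emph{down} to the left-endpoint floor $\alpha$ using the descending branch (property~\ref{lem1prop2}). The key enabling fact is that the premise $r_{i,j}\le\alpha$ places the forced floor on the \emph{high} side of the unconstrained optimum $r_{i,j}$, so that every feasible value $p_t\ge\alpha\ge r_{i,j}$ lands in the region where $\Obj$ is non-decreasing, and minimizing therefore means decreasing each value as far as the monotonicity chain permits.

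Concretely, first I would note that every feasible solution either (i) is the constant solution $p_i=\cdots=p_j=\alpha$, or (ii) has $p_j>\alpha$; it then suffices to show that no type-(ii) solution beats (i) for any RBPSR. Given any feasible solution, let $k$ be the smallest index with $p_k=p_{k+1}=\cdots=p_j$, i.e.\ the start of the maximal constant run ending at $j$. Property~\ref{item:th1_it2b} of Theorem~\ref{theorem2} gives $r_{k,j}\le r_{i,j}$, which combined with $r_{i,j}\le\alpha\le p_i\le p_k$ yields $r_{k,j}\le p_k=\cdots=p_j$. I can then invoke property~\ref{lem1prop2} of Lemma~\ref{lemma3} on subproblem $(k,j)$: decreasing the common value of this run (leaving all other $p_t$ fixed) does not increase $\Obj_{k,j}$, and hence does not increase the total objective. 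The run may thus be pushed down to its lower constraint---either $p_{k-1}$ (when $k>i$, absorbing index $k-1$ into the run, since $p_{k-1}\ge\alpha\ge r_{k,j}$ keeps the intermediate values in the non-decreasing region) or $\alpha$ (when $k=i$). Iterating this from $k=j$ downward, each step either lengthens the trailing constant run or terminates at $p_i=\cdots=p_j=\alpha$, establishing that the constant solution at $\alpha$ is optimal.

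The main obstacle, as in Lemma~\ref{lemma4}, is bookkeeping the inequality directions rather than any new idea: I must check at every stage that the trailing run's common value genuinely lies at or above $r_{k,j}$ (and that the new endpoint it is lowered toward does too) so that property~\ref{lem1prop2} applies, which is precisely what property~\ref{item:th1_it2b} of Theorem~\ref{theorem2} guarantees for each subproblem $(k,j)$ produced along the way. A secondary point to handle carefully is that the argument must hold \emph{simultaneously for every RBPSR}: since property~\ref{lem1prop2} supplies only the weak inequality $\Obj_{k,j}(q')\ge\Obj_{k,j}(q)$ for non-strict rules, I would phrase the conclusion as ``does not increase the objective'' throughout, which is enough to certify the $\alpha$-constant solution as a minimizer for all rules at once.
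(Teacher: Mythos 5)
Your proposal is correct and follows essentially the same route as the paper's own proof: the paper likewise treats Lemma~\ref{lemma5} as the mirror image of Lemma~\ref{lemma4}, combining property~2(ii) of Theorem~\ref{theorem2} with the premise to obtain $r_{k,j}\le\alpha\le p_k=\cdots=p_j$ and then invoking property~\ref{lem1prop2} of Lemma~\ref{lemma3} to push the trailing constant run down to its lower constraint. Your explicit attention to the weak inequalities for non-strict RBPSR's is a sound (and slightly more careful) rendering of the same argument.
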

\begin{proof}
The proof is similar to that of lemma~\ref{lemma4}, but here we invoke property~2(ii) of theorem~\ref{theorem2}, to find: $r_{k,j}< \alpha \le p_k=\cdots=p_j$ and we use the monotonicity property~\ref{lem1prop2} of lemma~\ref{lemma3} to show that the value of $p_k=\cdots=p_j$, which is optimal for all RBPSR's, must be as small as allowed by the constraints. 
\qquad\end{proof}

\begin{theorem}
\label{theorem3}
Given indices $i\le k\le j$ such that the optimal subproblem solutions for the two \emph{adjacent} subproblems, $(i,k)$ and $(k+1,j)$, are both constant and therefore (by theorem~\ref{theorem2}) have the respective values $r_{i,k}$ and $r_{k+1,j}$, then, whenever $r_{i,k} \ge r_{k+1,j}$, the optimal solution for the pooled subproblem $(i,j)$ is also constant, and has the value $r_{i,j}$. 
\end{theorem}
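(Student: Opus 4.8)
\emph{Overall strategy.}
The goal is to show that the constant assignment $p_i=\cdots=p_j=r_{i,j}$ minimizes $\Obj_{i,j}$ over all feasible solutions, simultaneously for every RBPSR; uniqueness — and hence that this constant solution is \emph{the} optimal subproblem solution in the defined sense — then follows from corollary~\ref{corollary2}, exactly as it did for the two constituent blocks. So the plan is to prove the single inequality $\Obj_{i,j}(\pvec_{i,j})\ge\Obj_{i,j}(r_{i,j})$ for an arbitrary feasible $\pvec_{i,j}$ and an arbitrary RBPSR, using lemmas~\ref{lemma4} and~\ref{lemma5} to bound the two blocks and lemma~\ref{lemma3} to locate the minimum of the resulting bound.

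\emph{Step 1: a mediant inequality.}
First I would establish that $r_{k+1,j}\le r_{i,j}\le r_{i,k}$. Writing $r_{i,k}=\tfrac{A_1}{A_1+B_1}$ and $r_{k+1,j}=\tfrac{A_2}{A_2+B_2}$ with $A_1=v_1 m_{i,k}$, $B_1=v_2 n_{i,k}$, $A_2=v_1 m_{k+1,j}$, $B_2=v_2 n_{k+1,j}$ (all $\ge0$, both denominators positive since the blocks are nonempty and $v_1,v_2>0$), additivity of the label counts gives $r_{i,j}=\tfrac{A_1+A_2}{(A_1+A_2)+(B_1+B_2)}$. This is the mediant of the two block ratios and therefore lies between them whenever $r_{i,k}\ge r_{k+1,j}$; this step is elementary.

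\emph{Step 2: lower bound via a free split value.}
Take any feasible $\pvec_{i,j}$ and split $\Obj_{i,j}(\pvec_{i,j})=\Obj_{i,k}(\pvec_{i,k})+\Obj_{k+1,j}(\pvec_{k+1,j})$, where the two restrictions are feasible for the blocks $(i,k)$ and $(k+1,j)$. The naive idea of comparing each block directly to the constant $r_{i,j}$ fails, because a feasible solution may \emph{overshoot} (for instance have $p_k>r_{i,k}$) and the per-index costs are not monotone in $p_t$. The fix is to introduce any value $s$ with $p_k\le s\le p_{k+1}$ (such $s$ exists by feasibility). For the left block $p_k\le s$, so either $s\ge r_{i,k}$ and the global optimality of $r_{i,k}$ gives $\Obj_{i,k}(\pvec_{i,k})\ge\Obj_{i,k}(r_{i,k})$, or $s<r_{i,k}$ and lemma~\ref{lemma4} (with $\alpha=s$) gives $\Obj_{i,k}(\pvec_{i,k})\ge\Obj_{i,k}(s)$; in both cases $\Obj_{i,k}(\pvec_{i,k})\ge\Obj_{i,k}(\min(s,r_{i,k}))$. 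Symmetrically, using the global optimality of $r_{k+1,j}$ and lemma~\ref{lemma5}, $\Obj_{k+1,j}(\pvec_{k+1,j})\ge\Obj_{k+1,j}(\max(s,r_{k+1,j}))$. Hence $\Obj_{i,j}(\pvec_{i,j})\ge F(s):=\Obj_{i,k}(\min(s,r_{i,k}))+\Obj_{k+1,j}(\max(s,r_{k+1,j}))$.

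\emph{Step 3: minimize the bound, and the main obstacle.}
It then remains to show $F(s)\ge\Obj_{i,j}(r_{i,j})$ for every $s$. On the middle range $s\in[r_{k+1,j},r_{i,k}]$, which contains $r_{i,j}$ by Step 1, both clamps are inactive, so $F(s)=\Obj_{i,k}(s)+\Obj_{k+1,j}(s)=\Obj_{i,j}(s)$, and property~\ref{lem1prop3} of lemma~\ref{lemma3} gives $\Obj_{i,j}(s)\ge\Obj_{i,j}(r_{i,j})$. For $s<r_{k+1,j}$ only the left term varies and property~\ref{lem1prop1} bounds $F(s)$ below by $F(r_{k+1,j})=\Obj_{i,j}(r_{k+1,j})\ge\Obj_{i,j}(r_{i,j})$; for $s>r_{i,k}$ only the right term varies and property~\ref{lem1prop2} bounds it below by $\Obj_{i,j}(r_{i,k})\ge\Obj_{i,j}(r_{i,j})$. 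Combining Steps~2 and~3 yields $\Obj_{i,j}(\pvec_{i,j})\ge\Obj_{i,j}(r_{i,j})$ for every feasible $\pvec_{i,j}$ and every RBPSR. I expect the crux to be precisely the overshoot difficulty flagged in Step~2: one cannot reduce to the two augmentation lemmas by splitting at $r_{i,j}$, since a feasible solution need not straddle $r_{i,j}$ at the block boundary. Introducing the free split value $s\in[p_k,p_{k+1}]$ and proving that the induced bound $F$ is itself minimized at $r_{i,j}$ is the whole content; the rest is bookkeeping. I would close by invoking corollary~\ref{corollary2} to upgrade ``constant $r_{i,j}$ attains the minimum for every RBPSR'' to ``constant $r_{i,j}$ is the unique optimal subproblem solution'', which is exactly what theorem~\ref{theorem1} needs in order to let PAV assemble the global solution.
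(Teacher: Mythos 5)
Your proposal is correct and follows essentially the same route as the paper's proof: both introduce a split value at the block boundary ($s$ in your notation, $\alpha$ in the paper's), perform the same three-way case analysis relative to $r_{k+1,j}$ and $r_{i,k}$, invoke lemmas~\ref{lemma4} and~\ref{lemma5} on the augmented blocks, and then use the monotonicity properties of lemma~\ref{lemma3} to drive the split value to the optimum. The only cosmetic differences are that you establish $r_{k+1,j}\le r_{i,j}\le r_{i,k}$ directly via the mediant inequality where the paper obtains the value $r_{i,j}$ by appealing to theorem~\ref{theorem2} after constancy is established, and your explicit lower-bound function $F(s)$ makes the paper's ``optimize the total solution by adjusting $\alpha$'' step slightly more rigorous.
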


\begin{proof}
First consider the case $r_{i,k}=r_{k+1,j}$. Since this forms a constant solution to subproblem $(i,j)$, by theorem~\ref{theorem2}, the optimal solution is $r_{i,j}$.
  
Next consider $r_{i,k}>r_{k+1,j}$. The solution $p_i=\cdots=p_k=r_{i,k}>p_{k+1}=\cdots=p_j=r_{k+1,j}$ is not feasible. A feasible solution must obey $p_k\le\alpha\le p_{k+1}$, for some $\alpha$. There are three possibilities for the value of $\alpha$: (i) $\alpha\le r_{k+1,j}$; (ii) $r_{k+1,j}<\alpha<r_{i,k}$; or (iii) $r_{i,k}\le\alpha$. We examine each in turn:
\begin{remunerate}
	\item[(i)] If $\alpha\le r_{k+1,j}<r_{i,k}$, then the left subproblem $(i,k)$ is augmented by the constraint $\alpha <r_{i,k}$, so that lemma~\ref{lemma4} applies and it is optimized at the constant solution $\alpha$, while the right subproblem $(k+1,j)$ is not further constrained and is still optimized at $r_{k+1,j}$. We can now optimize the total solution for $(i,j)$ by adjusting $\alpha$: By the monotonicity property~\ref{lem1prop1} of lemma~\ref{lemma3}, the left subproblem objective and therefore also the total objective for $(i,j)$ is optimized at the upper boundary $\alpha=r_{k+1,j}$. In other words, in this case, the optimum for subproblem $(i,j)$ is a constant solution.
	\item[(ii)] If $r_{k+1,j}<\alpha<r_{i,k}$, then lemma~\ref{lemma4} applies to the left subproblem and lemma~\ref{lemma5} applies to the right subproblem, so that both subproblems and therefore also the total objective for $(i,j)$ are all optimized at $\alpha$. In this case also we have a constant solution for $(i,j)$. 
	\item[(iii)] If $r_{k+1,j}<r_{i,k}\le\alpha$, then the right subproblem is augmented while the left subproblem is not further constrained. We can now use lemma~\ref{lemma5} and property~\ref{lem1prop2} of lemma~\ref{lemma3}, in a similar way to case (i) to show that in this case also, the optimum solution is  constant.
\end{remunerate}
Since the three cases exhaust the possibilities for choosing $\alpha$, the optimal solution is indeed constant and by theorem~\ref{theorem2} the optimum is at $r_{i,j}$. 
\qquad\end{proof}

\subsection{The PAV algorithm}
\label{sec:pav_alg}
We can now use theorems~\ref{theorem1}, \ref{theorem2} and~\ref{theorem3} to construct a proof that a version of the \emph{pool-adjacent-violators} (PAV) algorithm solves the whole problem $(1,T)$.
\begin{theorem}
The PAV algorithm solves the problem stated in~\S\ref{sec:main_problem}. 
\end{theorem}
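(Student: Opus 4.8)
The plan is to describe the PAV algorithm as a sequence of \emph{pooling} operations acting on a partition of $(1,T)$ into adjacent blocks, and then to prove, by induction on the number of poolings performed, the invariant that \emph{every current block is a subproblem whose optimal solution is constant}. Once this invariant is secured, theorem~\ref{theorem1} delivers optimality of the final assembled solution in one stroke, and corollary~\ref{corollary2} identifies it with the closed form computed by PAV.

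First I would set up the algorithm precisely. Initialize with $T$ singleton blocks $(1,1),(2,2),\ldots,(T,T)$. Each singleton $(t,t)$ admits only constant feasible solutions, since it contains a single free variable $p_t$, so its optimal solution is trivially constant, and by theorem~\ref{theorem2} its value is $r_{t,t}$, which is $1$ when $\ell_t=\prop_1$ and $0$ when $\ell_t=\prop_2$. This establishes the base case. The algorithm then repeatedly scans for a pair of \emph{adjacent} blocks $(i,k)$ and $(k+1,j)$ whose constant values \emph{violate} monotonicity, i.e.\ $r_{i,k}\ge r_{k+1,j}$, and \emph{pools} them into the single block $(i,j)$; it halts when no violating adjacent pair remains.

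The inductive step is the heart of the argument. Assuming every current block is a subproblem with a constant optimal solution at its own $r$-value, a pooling step merges two adjacent blocks precisely under the hypothesis $r_{i,k}\ge r_{k+1,j}$ --- which is exactly the precondition of theorem~\ref{theorem3}. That theorem then certifies that the merged block $(i,j)$ again has a constant optimal solution, at $r_{i,j}$. Hence the invariant is preserved by every pooling. Termination is immediate, since each pooling decreases the number of blocks by one, bounding the total number of poolings by $T-1$.

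Finally I would draw the conclusion. At termination the surviving blocks form a partitioning $\mathcal{S}$ of $(1,T)$; the absence of any violating pair means the block $r$-values are non-decreasing from left to right, so concatenating the per-block constant solutions yields a \emph{feasible} (monotone) solution $\qvec^*_{1,T}$ to the whole problem, in which each $\qvec^*_{i,j}$ is the optimal solution to subproblem $(i,j)$. These are exactly the hypotheses of theorem~\ref{theorem1}, which therefore identifies $\qvec^*_{1,T}$ as the optimal solution to $(1,T)$, simultaneously for every RBPSR; by corollary~\ref{corollary2} it is the unique such solution and coincides with the max-min expression~\eqref{eq:pav_maxmin}. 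I expect the one point demanding care --- rather than genuine difficulty --- to be the bookkeeping that the pooling trigger $r_{i,k}\ge r_{k+1,j}$ matches the precondition of theorem~\ref{theorem3} at every step, and that the halting condition really does force feasibility of the concatenation; the substantive work has already been front-loaded into theorems~\ref{theorem2} and~\ref{theorem3}.
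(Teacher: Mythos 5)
Your proposal is correct and follows essentially the same route as the paper's own constructive proof: initialize with singleton blocks, maintain the loop invariant that every block has a constant optimal solution at its $r$-value, pool violating adjacent pairs via theorem~\ref{theorem3}, and conclude with theorem~\ref{theorem1} and corollary~\ref{corollary2}. No gaps.
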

\begin{proof}
The proof is constructive. The strategy is to satisfy the conditions for theorem~\ref{theorem1}, by starting with optimal constant subproblem solutions of length 1 and then to iteratively combine  them via theorem~\ref{theorem3}, into longer optimal constant solutions until the total solution is feasible. The algorithm proceeds as follows:
  
\textbf{input:}
\begin{romannum}
  \item labels, $\seq{\ell}{T}\in\set{\prop_1,\prop_2}$.
  \item weights, $v_1,v_2>0$.
\end{romannum}

\textbf{variables:}
\begin{romannum}
	\item $\mathcal{S}$, a partitioning of problem $(1,T)$ into adjacent, non-overlapping subproblems.
  \item $\qvec^*_{1,T}=\seq{q^*}{T}$, a tentative (not necessarily feasible) solution for problem $(1,T)$.
\end{romannum}

\textbf{loop invariant:}
For every subproblem $(i,j)\in\mathcal{S}$: 
\begin{romannum}
\item The optimal subproblem solution is constant.
\item The partial solution $\qvec^*_{i,j}=q^*_i,q^*_{i+1},\ldots,q^*_j$ is equal to the optimal subproblem solution, i.e. constant, with value $r_{i,j}$ (by theorem~\ref{theorem2}).
\end{romannum}

\textbf{initialization:}
Let $\mathcal{S}$ be the finest partitioning into subproblems, so that there are $T$ subproblems, each spanning a single index. Clearly every subproblem $(i,i)$ has a constant solution, optimized at $q^*_i=r_{i,i}$, which is $1$, if $\ell_t=\prop_1$, or $0$, if $\ell_t=\prop_2$. This initial solution $\qvec^*_{1,T}$ respects the loop invariant, but is most probably not feasible. 

\textbf{iteration:}
While $\qvec^*_{1,T}$ is not feasible:
\begin{remunerate}
	\item Find any pair of adjacent subproblems, $(i,k),(k+1,j)\in\mathcal{S}$, for which the solutions are equal or violate monotonicity: $r_{i,k}\ge r_{k+1,j}$. 
	\item Pool $(i,k)$ and $(k+1,j)$ into one subproblem $(i,j)$, by adjusting $\mathcal{S}$ and by assigning the constant solution $r_{i,j}$ to $\qvec^*_{i,j}$, which by theorem~\ref{theorem3} is optimal for $(i,j)$, thus maintaining the loop invariant.
\end{remunerate}

\textbf{termination:}
Clearly the iteration must terminate after at most $T-1$ pooling steps, at which time $\qvec^*_{1,T}$ is now feasible and is still optimal for every subproblem. By theorem~\ref{theorem1}, $\qvec^*_{1,T}$ is then the unique optimal solution to problem $(1,T)$. 
\qquad\end{proof}
    
\section{The PAV-LLR algorithm}
\label{sec:PAV-LLR}
The PAV algorithm as presented above finds solutions in the form of \emph{probabilities}. Here we show how to use it to find solutions in terms of \emph{log-likelihood-ratios}. It will be convenient here to express Bayes' rule in terms of the logit function, $\logit(p)=\log\frac{p}{1-p}$. Note logit is a monotonic rising bijection between $[0,1]$ and the extended real line. Its inverse is the sigmoid function, $\sigma(w)=\frac{1}{1+e^{-w}}$. Bayes' rule is now~\cite{PTTLOS}:
\begin{equation}
\label{eq:br}
  \logit P(\prop_1|s_t) = w_t +\pi 
\end{equation} 
where the LHS is the \emph{posterior log-odds}, $w_t = \log\frac{P(s_t|\prop_1)}{P(s_t|\prop_2)}$ is the \emph{log-likelihood-ratio}, and $\pi=\logit P(\prop_1)$ is the \emph{prior log-odds}.

The problem that is solved by the PAV-LLR algorithm can now be described as follows:
\begin{remunerate}
\item There is given:
\begin{romannum}
	\item Labels, $\seq{\ell}{T}\in\set{\prop_1,\prop_2}$. We denote as $T_1$ and $T_2$ the respective numbers of $\prop_1$ and $\prop_2$ labels in this sequence, so that $T_1+T_2 = T$.
	\item Prior log-odds $\pi$, where $-\infty<\pi<\infty$. This determines a prior probability distribution for the two classes, namely $\bigl(P(\prop_1),P(\prop_2)\bigr)=\bigl(\sigma(\pi),1-\sigma(\pi)\bigr)$, which may be \emph{different} from the label proportions $\bigl(\frac{T_1}{T},\frac{T_2}{T}\bigl)$. 
	\item An RBPSR $\Crho$
\end{romannum}
\item There is required a solution $\nllv_{1,T}=\seq{w}{T}$, which minimizes the following objective:
\begin{align}
\label{eq:obj_pav_llr}
\Obj_{1,T}(\nllv_{1,T}) &= \sum_{t=1}^T \wsf(\ell_t) \Crho(\ell_t,p_t), \\
\label{eq:yabr}
p_t &= \sigma(w_t+\pi), \\
\label{eq:v1}
v_1 &= \wsf(\prop_1) = \frac{\sigma(\pi)}{T_1}
                     = \frac{P(\prop_1)}{T_1},\\  
\label{eq:v2}
v_2 &= \wsf(\prop_2) = \frac{1-\sigma(\pi)}{T_2}
                     = \frac{P(\prop_2)}{T_2} 
\end{align}
(The weights $v_1,v_2$ are chosen thus\footnote{This kind of class-conditional weighting has been used in several formal evaluations of the technologies of \emph{automatic speaker recognition} and \emph{automatic language recognition}, to weight the error-rates of hard recognition decisions~\cite{david,lre_odyssey06} and more recently to also weight logarithmic proper scoring of recognition outputs in log-likelihood-ratio form~\cite{csl,sre_odyssey06,lre_odyssey08}.} to cancel the influence of the proportions of label types, and to re-weight the optimization objective with the given prior probabilities for the two classes, but we show below that this re-weighting is irrelevant when optimizing with PAV.)
\item The minimization is subject to the monotonicity constraint:
\begin{equation}
\label{eq:constr_pav_llr}
-\infty \le w_1\le w_2 \le \cdots\le w_T \le \infty,
\end{equation} 
which by the monotonicity of~\eqref{eq:br} and the logit transformation is equivalent to~\eqref{eq:mon_constr}.
\end{remunerate}
This problem is solved by first finding the probabilities $\seq{p}{T}$ via the PAV algorithm and then inverting~\eqref{eq:yabr} to find $w_t=\logit(p_t)-\pi$. We already know that the solution is independent of the RBPSR, but remarkably, it is \emph{also} independent of the prior $\pi$. This is shown in the following theorem: 
\begin{theorem}
\label{th:pav-lr}
Let $p_t=\PAV_t\bigl((\seq{\ell}{T}),(v_1,v_2)\bigr)$ be given by~\eqref{eq:pav_maxmin}, then the problem of minimizing objective~\eqref{eq:obj_pav_llr}, subject to monotonicity constraint~\eqref{eq:constr_pav_llr} has the unique solution:
\begin{equation}
w_t=\logit \PAV_t\bigl((\seq{\ell}{T}),(1,1)\bigr)
-\logit\frac{T_1}{T}
\end{equation}  
This solution is simultaneously optimal for every RBPSR, $\Crho$, and any prior log-odds, $-\infty<\pi<\infty$.
\end{theorem}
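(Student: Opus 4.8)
The plan is to reduce the log-likelihood-ratio problem to the probability problem already solved, and then track how the prior $\pi$ and the label-proportion reweighting pass through the closed-form PAV solution of corollary~\ref{corollary2}. By the stated equivalence via Bayes' rule~\eqref{eq:br} and the monotone bijectivity of $\logit$, the constraint~\eqref{eq:constr_pav_llr} on $\nllv_{1,T}$ is identical to the probability constraint~\eqref{eq:mon_constr} on $p_t=\sigma(w_t+\pi)$. Hence minimizing~\eqref{eq:obj_pav_llr} over feasible $\nllv_{1,T}$ is exactly the main problem of \S\ref{sec:main_problem} with weights $(v_1,v_2)$ from~\eqref{eq:v1}--\eqref{eq:v2}, followed by the inversion $w_t=\logit(p_t)-\pi$. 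The main PAV theorem therefore already delivers the minimizing probabilities $p_t=\PAV_t\bigl((\seq{\ell}{T}),(v_1,v_2)\bigr)$, and uniqueness of $\nllv_{1,T}$ follows from the uniqueness of $p_t$ in corollary~\ref{corollary2} together with bijectivity of logit. What remains is purely to simplify $\logit(p_t)-\pi$ into the claimed prior-free form.

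First I would compute the logit of the pooled ratio. Writing $r_{i,j}=m_{i,j}v_1/(m_{i,j}v_1+n_{i,j}v_2)$, a direct calculation gives $\logit(r_{i,j})=\log(m_{i,j}/n_{i,j})+\log(v_1/v_2)$. Substituting~\eqref{eq:v1}--\eqref{eq:v2} and using $\sigma(\pi)/(1-\sigma(\pi))=e^{\pi}$ yields $\log(v_1/v_2)=\pi-\logit(T_1/T)$, so that $\logit(r_{i,j})=\logit(r'_{i,j})+c$, where $r'_{i,j}=m_{i,j}/(m_{i,j}+n_{i,j})$ is the unit-weight ratio and $c=\pi-\logit(T_1/T)$ is a constant independent of the endpoints $i,j$.

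The crux is then to push $\logit$ through the max-min. Since $\logit$ is a monotonically increasing bijection from $[0,1]$ onto the extended real line, it commutes with both $\max$ and $\min$, and since $c$ does not depend on $i,j$ it factors out of both operators. Thus $\logit(p_t)=\max_{1\le i\le t}\min_{t\le j\le T}\logit(r_{i,j})=\bigl(\max_{1\le i\le t}\min_{t\le j\le T}\logit(r'_{i,j})\bigr)+c=\logit\PAV_t\bigl((\seq{\ell}{T}),(1,1)\bigr)+c$. Subtracting $\pi$ cancels the prior exactly, leaving $w_t=\logit\PAV_t\bigl((\seq{\ell}{T}),(1,1)\bigr)-\logit(T_1/T)$, which depends on neither $\pi$ nor the particular RBPSR.

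The main obstacle I anticipate is handling the boundary subsequences where all labels agree, so that $r'_{i,j}\in\set{0,1}$ and $\logit(r'_{i,j})=\pm\infty$. I would justify the max-min commutation on the extended real line, noting that adding the finite constant $c$ to $\pm\infty$ preserves the value and that the weighted ratio $r_{i,j}$ is likewise $0$ or $1$ there, so the identity $\logit(r_{i,j})=\logit(r'_{i,j})+c$ persists in these degenerate cases. Beyond this the argument is routine algebra; the one conceptual point worth emphasizing is that prior-independence is not an accident of a particular scoring rule but follows structurally from $c$ being an additive logit shift that the max-min leaves untouched.
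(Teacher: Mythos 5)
Your proposal is correct and follows essentially the same route as the paper's own proof: reduce to the probability problem via the main PAV theorem, pass $\logit$ through the max-min of corollary~\ref{corollary2}, and observe that $\logit r_{i,j}=\logit\frac{m_{i,j}}{m_{i,j}+n_{i,j}}-\logit\frac{T_1}{T}+\pi$ so the prior cancels. Your only additions are a slightly more direct cancellation (the paper instead notes prior-independence and then picks $\pi=\logit\frac{T_1}{T}$ for convenience) and an explicit treatment of the $\logit(0)=-\infty$, $\logit(1)=+\infty$ boundary cases, which the paper leaves implicit.
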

\begin{proof}
By the properties of the PAV as proved in~\S\ref{sec:pav_alg} and since logit is a strictly monotonic rising bijection, it is clear that for all RBPSR's and for a given $\pi$, this minimization is solved as
\begin{equation}
w_t=\logit \PAV_t\bigl((\seq{\ell}{T}),(v_1,v_2)\bigr)
-\pi
\end{equation}
where $\pi$ determines $v_1$ and $v_2$ via~\eqref{eq:v1} and~\eqref{eq:v2}. By corollary~\ref{corollary2}, we can write component $t$ of this solution, in closed form:
\begin{equation}
\begin{split}
w_t &= \logit\left(\max_{1\le i\le t} \, \min_{t\le j\le T} r_{i,j}\right) - \pi \\
&= \max_{1\le i\le t} \, \min_{t\le j\le T} \logit r_{i,j} - \pi
\end{split}
\end{equation} 
Now observe that:
\begin{equation}
\begin{split}
\logit r_{i,j} &= \logit \frac{v_1m_{i,j}}{v_1m_{i,j}+v_2n_{i,j}}\\
&= \logit\frac{m_{i,j}}{m_{i,j}+n_{i,j}} -\logit \frac{T_1}{T} + \pi,
\end{split}
\end{equation}
which shows that $w_t$ is independent of $\pi$. Now the prior may be conveniently chosen to equal the label proportion, $\pi=\logit\frac{T_1}{T}$, to give an un-weighted PAV, with $v_1=v_2=1$.
\qquad\end{proof}

\section{Discussion}
\label{discussion}
We have shown that the problem of monotonic, non-parametric calibration of binary pattern recognition scores is optimally solved by PAV, for all regular binary proper scoring rules. This is true for calibration in posterior probability form and also in log-likelihood-ratio form. 

We conclude by addressing some concerns that readers may have about whether the optimization problem solved here is actually useful in real pattern recognition practice, where a calibration transform is trained in a supervised way (as here) on some training data, but is then utilized later on \emph{new unsupervised} data. 

The first concern we address is about the non-parametric nature of the PAV mapping, because for general real scores there will be new unmapped score values. An obvious solution is to map new values by interpolating between the (input,output) pairs in the PAV solution and this was indeed done in several of the references cited in this paper (see e.g.~\cite{synergy} for an interpolation algorithm). 

Another concern is that the PAV mapping from scores to calibrated outputs has flat regions (all those constant subproblem solutions) and is therefore not an invertible transformation. Invertible transformations are information-preserving, but non-invertible transformations may lose some of the relevant information contained in the input score. This concern is answered by noting that expectations of proper scoring rules are generalized information measures~\cite{DeGroot,Dawid} and that in particular the expectation of the logarithmic scoring rule is equivalent to Shannon's cross-entropy information measure~\cite{coverthomas}. So by optimizing proper scoring rules, we are indeed optimizing the information relevant to discriminating between the two classes. Also note that a \emph{strictly} monotonic (i.e. invertible) transformation can be formed by adding an arbitrarily small strictly monotonic perturbation to the PAV solution. The PAV solution can be viewed as the argument of the infimum of the RBPSR objective, over all strictly rising monotonic transformations. 

In our own work on calibration of speaker recognition log-likelihood-ratios~\cite{taslp}, we have chosen to use strictly monotonic rising \emph{parametric} calibration transformations, rather than PAV. However, we then do use the PAV calibration transformation in the supporting role of \emph{evaluating} how well our parametric calibration strategies work. In this role, the PAV forms a well-defined reference against which other calibration strategies can be compared, since it is the best possible monotonic transformation that can be found on a given set of supervised evaluation data. It is in this evaluation role, that we consider the optimality properties of the PAV to be particularly important.       

For details on how we employ PAV as an evaluation tool\footnote{Our PAV-based evaluation tools are available as a free MATLAB toolkit here: \url{http://www.dsp.sun.ac.za/~nbrummer/focal/}}, see~\cite{csl,springer}.

\section*{Acknowledgments}
We wish to thank Daniel Ramos for hours of discussing PAV and calibration, and without whose enthusiastic support this paper would not have been written.

\appendix
\section{Note on RBPSR family}
\label{appendix}
Some notes follow, to place our definition of the RBPSR family, as defined in~\S\ref{sec:rbpsr} in context of previous work. Our \emph{regularity} condition (i), directly below~\eqref{eq:c_rbpsr},  is adapted from~\cite{Dawid,Gneiting}. General families of binary proper scoring rules have been represented in a variety of ways (see~\cite{Gneiting} and references therein), including also integral representations that are very similar (but not identical in form) to our~\eqref{eq:c_rbpsr}. See for example~\cite{DeGFien}, where the form $\int_q^1 \rho'(\eta)\,d\eta, \;\; \int_0^q \frac{\eta}{1-\eta} \rho'(\eta)\,d\eta$ was used; or~\cite{Buja_degrees,Gneiting} where 
$\int_q^1 (1-\eta) \rho''(\eta)\,d\eta, \;\;
\int_0^q \eta \rho''(\eta)\,d\eta$
was used. Equivalence to~\eqref{eq:c_rbpsr} is established by letting $\rho'(\eta)=\frac{\rho(\eta)}{\eta}$ and $\rho''(\eta)=\frac{\rho(\eta)}{\eta(1-\eta)}$. The advantage of the form~\eqref{eq:c_rbpsr} which we adopt here, is that the weighting function $\rho(\eta)$ is always in the form of a normalized probability density, which gives the natural interpretation of \emph{expectation} to these integrals.

The reader may notice that it is easy (e.g. by applying an affine transform to~\eqref{eq:c_rbpsr}) to find a binary proper scoring rule which satisfies the properties of lemma~\ref{lemma3}, but which is not in the family defined by~\eqref{eq:c_rbpsr}. There are however equivalence classes of proper scoring rules, where the members of a class are all equivalent for making minimum-expected-cost Bayes decisions~\cite{DeGroot,Dawid}. Elimination of this redundancy allows normalization of arbitrary proper scoring rules in such a way that the family~\eqref{eq:c_rbpsr} becomes representative for the members of these equivalence classes~\cite{csl}.

\end{document}